\providecommand{\Xsym}{X}
\providecommand{\ysym}{y}
\providecommand{\Wsym}{W}
\providecommand{\Usym}{U}
\providecommand{\Qsym}{Q}
\providecommand{\Vsym}{V}
\providecommand{\Tsym}{T}
\providecommand{\Psym}{P}
\providecommand{\tsym}{t}
\providecommand{\qsym}{q}
\providecommand{\rsym}{r}
\providecommand{\wsym}{w}
\providecommand{\vsym}{v}
\providecommand{\jx}{j}
\providecommand{\lasym}{\lambda}
\providecommand{\Lasym}{\Lambda}
\providecommand{\gasym}{\gamma}
\providecommand{\Gasym}{\Gamma}
\providecommand{\scsym}{c}
\providecommand{\Xesym}{E}
\providecommand{\yesym}{f}
\providecommand{\resym}{\epsilon}
\providecommand{\rosym}{\rho}
\providecommand{\alsym}{\alpha}
\providecommand{\regasym}{\alpha}
\providecommand{\regbsym}{\beta}
\providecommand{\ffsym}{\phi}
\providecommand{\obfsym}{\psi}
\providecommand{\mrsym}{\kappa}
\providecommand{\ro}{m}
\providecommand{\cl}{n}
\providecommand{\fc}{l}
\providecommand{\ms}{k}
\providecommand{\jacb}[2]{\prbypr{#1}{#2}}
\providecommand{\kron}[1]{\MATHIT{\neSUB{\delta}{#1}}}
\providecommand{\scalk}[2][{k}]{\MATHIT{\neSUP{#2}{\wrapNeParens{#1}}}}
\providecommand{\Mtxk}[2][{k}]{\scalk[#1]{\Mtx{#2}}}
\providecommand{\vectk}[2][{k}]{\scalk[#1]{\vect{#2}}}
\providecommand{\undrl}[1]{\tilde{#1}}				
\providecommand{\nePAIR}[2]{{#1}\lMr{#2}}
\providecommand{\sclr}[1][{}]{\MATHIT{\neSUB{\scsym}{#1}}}
\providecommand{\LAM}{\Mtx{\Lasym}}
\providecommand{\LAMk}[1]{\MATHIT{\neSUP{\Mtx{\Lasym}}{#1}}}
\providecommand{\lame}[1]{\MATHIT{\neSUB{\lasym}{#1}}}
\providecommand{\lamex}[2]{\MATHIT{\neUL{\lasym}{#1}{#2}}}
\providecommand{\vlam}{\vect{\lasym}}
\providecommand{\dlam}[1]{\jacb{#1}{\vlam}}
\providecommand{\glam}[1][{}]{\MATHIT{\nabla_{\vlam}{#1}}}
\providecommand{\vlamk}[1][{k}]{\vectk[#1]{\lasym}}
\providecommand{\GAM}{\Mtx{\Gasym}}
\providecommand{\GAMj}[1][{j}]{\Mtxk[#1]{\Gasym}}
\providecommand{\game}[1]{\MATHIT{\neSUB{\gasym}{#1}}}
\providecommand{\vgam}{\vect{\gasym}}
\providecommand{\vrok}[1][{k}]{\vectk[#1]{\rosym}}
\providecommand{\salk}[1][{k}]{\scalk[#1]{\alsym}}
\providecommand{\mrp}[2][{p}]{\MATHIT{\neSUB{\mrsym}{#1}\wrapNeParens{#2}}}
\providecommand{\mrpq}[2][{p,q}]{\MATHIT{\neSUB{\mrsym}{#1}\wrapNeParens{#2}}}
\providecommand{\obf}[1]{\MATHIT{\obfsym\wrapNeParens{#1}}}
\providecommand{\eye}[1][{}]{\MATHIT{\neSUB{\Mtx{I}}{#1}}}
\providecommand{\ip}[2]{\MATHIT{\trans{#1}{#2}}}
\providecommand{\prd}{\Mtx{\Xsym}}
\providecommand{\rsp}{\vect{\ysym}}
\providecommand{\Xe}{\Mtx{\Xesym}}
\providecommand{\ye}{\vect{\yesym}}
\providecommand{\msz}[1]{\MATHIT{\neSUB{\ro}{#1}}}
\providecommand{\Xcal}{\MATHIT{\prd_{\text{cal}}}}
\providecommand{\ycal}{\MATHIT{\rsp_{\text{cal}}}}
\providecommand{\Xtst}{\MATHIT{\prd_{\text{tst}}}}
\providecommand{\ytst}{\MATHIT{\rsp_{\text{tst}}}}
\providecommand{\rotst}{\MATHIT{\ro_{\text{t}}}}
\providecommand{\regap}[1]{\vect{\regasym}\wrapNeParens{#1}}
\providecommand{\regbp}[1]{\vect{\regbsym}\wrapNeParens{#1}}
\providecommand{\rega}{\regap{}}
\providecommand{\regb}{\regbp{}}
\providecommand{\rego}{\MATHIT{b_0}}
\providecommand{\dregb}{\dlam{\regb}}
\providecommand{\drega}{\dlam{\rega}}
\providecommand{\regmodel}{\tuple{\regb,\rego}}
\providecommand{\Mij}[2]{\MATHIT{\neSUB{\Mtx{M}}{\nePAIR{#1}{#2}}}}
\providecommand{\Mk}[1][{k}]{\Mtxk[#1]{M}}
\providecommand{\Lk}[1][{k}]{\Mtxk[#1]{L}}
\providecommand{\Xk}[1][{k}]{\Mtxk[#1]{\Xsym}}
\providecommand{\yk}[1][{k}]{\vectk[#1]{\ysym}}
\providecommand{\XLXG}[1][2]{\Xk[0]\LAMk{#1}\trans{\Xk[0]}\GAM}
\providecommand{\vz}{\vect{0}}
\providecommand{\Mz}{\Mtx{0}}
\providecommand{\Wk}[1][{k}]{\vectk[#1]{\Wsym}}
\providecommand{\Vk}[1][{k}]{\vectk[#1]{\Vsym}}
\providecommand{\Tk}[1][{k}]{\vectk[#1]{\Tsym}}
\providecommand{\Pk}[1][{k}]{\vectk[#1]{\Psym}}
\providecommand{\Uk}[1][{k}]{\vectk[#1]{\Usym}}
\providecommand{\Qk}[1][{k}]{\vectk[#1]{\Qsym}}
\providecommand{\Wmt}{\Mtx{\Wsym}}
\providecommand{\Tmt}{\Mtx{\Tsym}}
\providecommand{\Pmt}{\Mtx{\Psym}}
\providecommand{\qk}[1][{k}]{\MATHIT{\neSUB{\qsym}{#1}}}
\providecommand{\tk}[1][{k}]{\MATHIT{\neSUB{\tsym}{#1}}}
\providecommand{\rk}[1][{k}]{\MATHIT{\neSUB{\rsym}{#1}}}
\providecommand{\wk}[1][{k}]{\MATHIT{\neSUB{\wsym}{#1}}}
\providecommand{\vk}[1][{k}]{\MATHIT{\neSUB{\vsym}{#1}}}
\providecommand{\uTk}[1][{k}]{\vectk[#1]{\undrl{\Tsym}}}
\providecommand{\uPk}[1][{k}]{\vectk[#1]{\undrl{\Psym}}}
\providecommand{\uqk}[1][{k}]{\MATHIT{\neSUB{\undrl{\qsym}}{#1}}}
\providecommand{\qvc}{\vect{\qsym}}
\providecommand{\wun}[1]{\MATHIT{\neSUB{\vect{1}}{#1}}}
\providecommand{\Wun}[1]{\vectk[#1]{1}}
\providecommand{\wunt}{\wun{\text{tst}}}
\providecommand{\wunro}{\wun{\ro}}
\providecommand{\resd}{\vect{\resym}}
\providecommand{\rese}[1]{\MATHIT{\neSUB{\resym}{#1}}}
\providecommand{\resdk}[1][{k}]{\vectk[#1]{\resym}}
\providecommand{\diag}[1]{\MATHIT{\text{diag}\wrapNeParens{#1}}}
\providecommand{\ffn}[1]{\MATHIT{\ffsym\wrapNeParens{#1}}}
\providecommand{\MSEP}{\text{MSEP}\xspace}
\providecommand{\RMSEP}{\text{RMSEP}\xspace}
\providecommand{\MSECV}{\text{MSECV}\xspace}
\providecommand{\RMSECV}{\text{RMSECV}\xspace}
\providecommand{\RMSEMCCV}{\text{RMSEMCCV}\xspace}
\providecommand{\SRCEK}{\text{SRCEK}\xspace}
\providecommand{\Qs}[1][{k}]{\MATHIT{Q^2}}
\providecommand{\BIC}{\text{BIC}\xspace}
\providecommand{\aBIC}{\text{aBIC}\xspace}
\providecommand{\sigmy}{\MATHIT{\sigma^2_{\ysym}}}
\providecommand{\dsm}{\emph{Moisture}}
\providecommand{\dsk}{\emph{Kalivas}}
\providecommand{\dsa}{\emph{Artificial}}
\newcommand{\keywords}[1]%
{\def\@keywords{\textbf{#1}}}
\title{\SRCEK{}: A Continuous Embedding of the Channel Selection Problem for WPLS Modeling.}
\author{Steven E. Pav\thanks{spav@alumni.cmu.edu This method of wavelength 
selection may be covered by patent. \cite{USPAT8112375}
This research was originally conducted during the author's tenure at 
Nellcor, a subsidiary of Tyco Healthcare, now known as `Covidien'. 
The author wishes to thank M. Forina for providing data sets and guidance
regarding prior work on the topic.
Some of this research was conducted
at the time the author was a juror in the court of 
Judge Donald S. Mitchell, Department \#602, City and County of San Francisco,
California: ``Everyone's here and we. are. ready.''}}
\date{\today}
\keywords{Chemometrics, PLS, Wavelength Selection.}
\begin{document}
\maketitle
\thispagestyle{fancy}
\begin{abstract}
\SRCEK{}, pronounced ``SIR check,'' is a technique for selecting useful
channels for affine modeling of a response by PLS.  The technique embeds the
discrete channel selection problem into the continuous space of predictor
preweighting, then employs a Quasi-Newton (or other) optimization algorithm to
optimize the preweighting vector.  Once the weighting vector has been
optimized, the magnitudes of the weights indicate the relative importance of
each channel.  The relative importances are used to construct $\cl$ different
models, the \kth{\ms} consisting of the $\ms$ most important channels.  The
different models are then compared by means of cross validation or an information 
criterion (\eg \BIC{}), allowing automatic selection of a good subset of the
channels.  The analytical Jacobian of the PLS regression vector with respect to
the predictor weighting is derived to facilitate optimization of the latter.
This formulation exploits the reduced rank of the predictor matrix to gain some
speedup when the number of observations is fewer than the number of predictors
(the usual case for \eg IR spectroscopy).  The method compares favourably with
predictor selection techniques surveyed by Forina \etal  \cite{Forinaetal2004}  
\end{abstract}
\normalsize\textbf{Keywords:~}\@keywords\vskip20pt
\nocite{USPAT8112375}

\section{Introduction}

The modern chemometrician suffers from an embarrassment of riches: 
investigative instruments (\eg NIR spectrometers) commonly allow measurements
in more discrete ``channels'' than the relevant underlying degrees of freedom or
the number of objects under investigation.  While a wide array of channels may
provide greater predictive power, some channels may confound prediction of the
relevant response, essentially measuring only ``noise.'' Moreover, the
principle of parsimony dictates that a predictive model must focus on as few of
the channels as practical, or fewer.

The problem was mitigated by the development of principal component
regression (PCR) and partial least squares (PLS)
\cite{dJSetal2004,El2004,RrKn2005,Gg1988,HLetal2003}, two algorithmic techniques
which essentially project many variate predictors onto a small dimensional
vector space before building a predictive model.  While these techniques are
very good at capturing the largest or most relevant underlying variations in
multivariate predictors, they are not impervious to disinformative or useless
predictors.  For example, it has been shown that the predictive ability of PLS is
degraded by a term quadratic in the number of channels.  \cite{NbCrr2005}

A number of clever and theoretically well grounded techniques exist for the
rejection of useless wavelengths. \cite{Chenetal2005,Hjaetal2003,Ntemetal2004,Bpjetal1999}
Rather than discuss them in any depth here,
I refer the reader to the excellent comparative study by Forina \etal
\cite{Forinaetal2004}  

\section{Problem Formulation}

Let \prd be an $\ro\cross\cl$ matrix of observed ``predictors'', with each
column of \prd a ``channel'' of the investigative instrument.  
Let \rsp be an $\ro$-dimensional column vector of
corresponding ``responses'' of each of the objects.  
Each row of \prd, with corresponding element of \rsp, represents an observation
of an ``object''.
In the general context of chemometrics, the number
of objects is generally far fewer than the number of channels:  $\ro\ll\cl$.

An affine predictive model consists of an
$\cl$-vector $\regb$ and scalar ``intercept'' \rego such that $\rsp \approx
\prd\regb + \rego\wunro,$ where \wunro is the vector of $\ro$ ones.  When calibrated on a
given collection of predictors and responses \prd and \rsp, different algorithms 
produce different affine models.  

A good model is marked by a residual with small norm.  That is,
$\sum_i \rese{i}^2$ is small, where $\resd = \rsp - \prd\regb - \rego\wunro$ is the 
residual.  However, a model which explains the observed data well may give poor
predictive ability over as-yet-unobserved objects, due to ``overfitting.''

Cross validation (CV) is used to address this deficiency.  The idea is to
divide the tested objects into two groups, one of which is used to build a
model \regmodel, the other is used to test the quality of the model.  By leaving
these validation or test objects out of the calibration, this technique
simulates the predictive ability of the observed data on unobserved data.

Because the division in two groups is arbitrary, the process is
often repeated in a round robin fashion, with different subdivisions.  The
quality of prediction of each division is then averaged.  In one
extreme form, known as ``leave one out'' (LOO) or ``delete-$1$'', the model 
building and testing is
performed $\ro$ times, each time with $\ro-1$ objects in the 
calibration group, and the single remaining object in the test group.
For comparing different subsets of the channels, Shao proved that delete-$1$ CV
is asymptotically inconsistent, \ie it has a nonzero probability of
overfitting. \cite{Shao:1993:LMS,Shao:1997:ATLMS}

Some terminology is now required.  The norm (more specifically the $2$-norm) of a 
vector \vect{v} is its Euclidian length:
\(\norm{\vect{v}} = \sqrt{\sum_i v_i^2} = \sqrt{\ip{\vect{v}}{\vect{v}}}.\)
The mean square norm of a $k$-dimensional vector, \vect{v} is
$\norm{\vect{v}}^2 / k$.  The mean squared error of a model \regmodel for given
data \prd and \rsp is the mean square norm of the residual $\rsp - \prd\regb -
\rego\wun{}.$ The mean squared error of cross validation (\MSECV) is the mean over each cross validation group of the mean square error of the model built by the calibration
group on the test group:
\[\MSECV = \oneby{J}\sum_{j=1}^J \oneby{\msz{j}}\norm{\rsp_j - \prd_j\regb_j -
{\rego}_j\wun{\msz{j}}}^2,\]
where $\prd_j$ and $\rsp_j$ are the predictors and responses of the \kth{j}
test group which contains \msz{j} objects, while \tuple{\regb_j,{\rego}_j} is built 
from the \kth{j} calibration group.  The mean squared error of prediction (\MSEP) of a
model is the mean square error of the data used to build the model:
\(\MSEP = \norm{\rsp - \prd \regb - \rego \wunro}^2/{\msz{}},\) where \regmodel is built
using (all) the data \prd and \rsp.

The prefix ``root'' refers to square root, thus the
root mean squared error of cross validation (\RMSECV) is $\sqrt{\MSECV}$;
similarly for \RMSEP, \etc


The goal of channel selection appears to be, given the observed
data, the CV groups and a model building technique, select the subset of the
$n$ available columns of
\prd which minimizes \RMSECV{} 
when only those
columns of \prd are used in the model building and testing.  In this
formulation the number of possible solutions is $2^\cl$; exhaustive search
becomes impractical when $\cl$ is larger than about 17.  Subset selection
heuristics such as Tabu search, Simulated Annealing (SA) and Genetic Algorithms
(GA) which generate and test subsets of the $\cl$ available channels can only hope
to explore a small part of the $2^\cl$ sized discrete space of possible subsets,
and are susceptible to backtracking and falling into local minima.  
\cite{Hjaetal2003,Ntemetal2004,Bpjetal1999}  Even when restricted
to subsets of no more than a given number of channels, say $d,$ 
heuristic search can cover only a small part of the search space, which has
size around $\cl^d$.

Considerable progress was made on the problem by the introduction of iterative
predictor weighting (IPW) by Forina \etal \cite{Forinaetal1999} This
channel selection technique reformulates the problem as one of selecting a
vector in $\cl$ dimensional Euclidian space, hereafter denoted \reals{n}, rather
than on the discrete space of binary $\cl$-dimensional 
vectors. 
In terms of size of the search space there would seem to be no
advantage to this reformulation.  However, I argue that the continuous
embedding allows the importance of each channel to be evaluated and changed
simultaneously, in parallel, rather than serially.   This will lead to a channel
selection technique with runtime asymptotically linear in the number of channels.

Predictive weighting can be viewed as a preprocessing step.  Let $\vlam \in
\reals{\cl}$ be a vector of weights\footnote{For this note, we ignore the
possibility of \vlam having a zero element.}.  Let \LAM be the diagonal
$\cl\cross\cl$ matrix whose diagonal is the vector \vlam.  Hereafter I will let
\diag{\vect{v}} denote the diagonal matrix whose diagonal is \vect{v}, so $\LAM
= \diag{\vlam}$.  The \vlam-weighted predictor matrix is the
product $\prd\LAM$:  the \kth{k} column of the weighted predictor matrix is the
\kth{k} column of \prd times \lame{k}.  Weighted predictors are then used in
the cross validation study, both in the calibration and test sets.  As such, the
quality of the cross validation (\RMSECV{})
can be viewed as a scalar
function of the vector \vlam, once the data and CV groups and model building
method (and order) have been fixed.

Note that when the regression model \regmodel is built by ordinary least squares
(OLS), the quality of cross validation is constant with respect to \vlam. This
occurs because the weighted regression model output by OLS is constant with
respect to nonzero predictor weighting, \ie $\LAM\regbp{\vlam}$, is constant over
all \vlam with nonzero elements.  PLS, however, does not share this
property, and the quality of cross validation is affected by predictor
weighting.  When used as a preprocessing technique prior to the application of
PLS, the usual strategy is to apply predictor weighting \vlam where $\lame{k} =
1/{\hat{\sigma}_k}^2,$ where ${\hat{\sigma}_k}^2$ is the sample standard
deviation of the \kth{k} channel of the predictors based on the observations 
in the entire sample
\prd, a technique called ``autoscaling.''  There is no reason to believe
\apriori that this choice of \vlam gives good cross validation.  Rather an
\apriori choice of weighting should depend on knowledge of the test instrument
or the tested objects.  Alternatively, one can let the data inform the choice
of predictor weighting.

The method I propose is to minimize the \RMSECV{} as a function of \vlam.  This
can be achieved by any of a number of modern optimization algorithms,
including BFGS \cite{NjWsj1999,liu89limited}, which I explore and advocate
here.  Once a local minimum has been found, the magnitude of the elements of
the optimal \vlam suggest the importance of the corresponding channels to the 
observed data.  This ordering suggest $\cl$ different models, the \kth{\ms}
consisting of the first $\ms$ channels by order of decreasing importance.  These
models can then be compared using any model selection technique, \eg
minimization of \RMSECV or an information criterion.
\cite{Shao:1993:LMS,Shao:1997:ATLMS,Ah:1974,Bk2004,CjeNaa1999}

I call this technique ``SRCEK{}'' (pronounced ``SIR check''), an
acronym for 
``Selecting Regressors by Continuous Embedding in K-dimensions,''
but also taken from the Slove word {\emph{sr\v{c}ek}},
meaning ``sweetheart.''  


\section{(W)PLS Regression}

The assumption underlying PLS regression is that predictor and response
take the form
\begin{align}
\Xk[]&=\uTk[0]\trans{\uPk[0]} + \uTk[1]\trans{\uPk[1]} + \ldots +
\uTk[l]\trans{\uPk[l]} + \Xe,\label{eqn:Xdeco}\\
\yk[]&=\uTk[0]{\uqk[0]} + \uTk[1]{\uqk[1]} + \ldots + \uTk[l]{\uqk[l]} +
\ye,\label{eqn:ydeco}
\end{align}
where the vectors \uTk are orthogonal, and the remainder terms, \Xe and \ye, are
random variables.  The vector \uTk[0] is \wun{}, the vector of all ones.
It is also assumed that the response remainder term, \ye, is homoscedastic, 
\ie
\[\E{\ye\trans{\ye}} = \sigmy\eye.\]
When this assumption cannot be supported, weighted PLS (WPLS) regression is
appropriate.  \cite{Hi1988,DBLP:journals/bioinformatics/FortL05,1998AIPC..430..253H}
Let \GAM be a symmetric positive definite matrix such that
$\GAM = \sclr{} \invs{\E{\ye\trans{\ye}}},$ for some (perhaps unknown) scalar
\sclr{}.
Weighted PLS assumes a decomposition of \Xk[] and \yk[] as in
equations~\ref{eqn:Xdeco} and \ref{eqn:ydeco}, but with the property that the 
vectors \uTk are \GAM-orthogonal: $\uTk[k]\GAM\uTk[j] = 0$ if $k\ne j$.

WPLS regression with $\fc$ factors computes $\ro \cross \fc$ matrix \Tmt,
$\cl\cross\fc$ matrix \Pmt and $\fc$ vector \qvc such that
\[\Xk[] \approx \Tk[0]\trans{\Pk[0]} + \Tmt \trans{\Pmt},\quad\text{and}\quad
\yk[] \approx \Tk[0]\qk[0] + \Tmt \qvc,\]
where \Tk[0] is the vector of all ones, and \trans{\Pk[0]} and \qk[0] are the
(\GAM-weighted) means of \Xk[] and \yk[].
The affine model constructed by WPLS takes the 
form $\regb = \Wmt \invs{\Parens{\trans{\Pmt}\Wmt}} \qvc,$ 
for some matrix \Wmt, with intercept $\rego = \qk[0] - \trans{\Pk[0]}\regb$.  
Thus
\[\Xk\regb + \Tk[0]\rego 
\approx \Tk[0]\trans{\Pk[0]}\regb + \Tmt\trans{\Pmt} \Wmt
\invs{\Parens{\trans{\Pmt}\Wmt}} \qvc + \Tk[0]\qk[0] - \Tk[0]\trans{\Pk[0]}\regb
= \Tk[0]\qk[0] + \Tmt\qvc \approx \yk[].\]



The use of the words ``weight'' or ``weighting'' in this context is
traditional, and parallels the usage for ordinary least squares.  It should not
be confused with the predictor weighting applied to the predictors.  To distinguish 
them, 
I will refer to \GAM{} as the response weights.  For the remainder of this
paper, I will assume that \GAM{} is a diagonal vector, $\GAM = \diag{\vgam}.$
This is not necessary for correctness of the algorithm, only for its fast
runtime, for which a sufficiently sparse \GAM would also
suffice.

\section{\RMSECV{} and its Gradient}

Given fixed data, I will presuppose the existence of a selected order, $\fc$.
The selection of $\fc$ should be informed by the underlying structure of the
objects under investigation, or by an automatic technique.
\cite{Forinaetal2004}  
In the chemometric context, the number of factors should be less (preferably
far less) than the number of objects: $\fc\le\ro$.
Let $\ffn{\vlam}$ be the \RMSECV{} for the CV when the CV groups are fixed, and
the affine model is built by $l$-factor WPLS with \vgam given, and using
predictor weights \vlam.  To employ quasi-Newton
minimization, the gradient of \ffn{} with respect to \vlam must be computed.
While this can be approximated numerically with little extra programmer effort,
the computational overhead can be prohibitive.  Thus I develop the analytic
formulation of the gradient.   At this point, the reader may wish to consult
the brief guide to vector calculus provided in \secref{bguide}.

In the general formulation, there is an response weight associated with each
observation.   These weights should be used in both the model construction
and error computation phases.  Thus, I rewrite the \RMSECV{} as a weighted
\RMSECV{} as follows:
\[\ffn{\vlam} = \sqrt{\oneby{J}\sum_{j=1}^J 
\frac{ \ip{\resdk[j]}{\GAMj[j]\resdk[j]}}{\ip{\Wun{j}}{\GAMj[j]\Wun{j}}} },\]
where $\resdk[j]$ is the residual of the \kth{j} test group, \GAMj[j] is
the diagonal matrix of the response weights of the \kth{j} test group, and
\Wun{j} is the appropriate length vector of all ones.  The gradient of
this is
\[\glam{\ffn{\vlam}} = \oneby{\ffn{\vlam}}\oneby{J}\sum_{j=1}^J
\oneby{\ip{\Wun{j}}{\GAMj[j]\Wun{j}}} \trans{\dlam{\resdk[j]}}\GAMj[j]\resdk[j],\]
Each residual takes the form
\[\resd = \ytst - \Parens{\Xtst\LAM\regbp{\vlam} + \wunt\rego\Parens{\vlam}},\]
thus the Jacobian of the residual is 
\begin{equation}
\dlam{\resd} = - \Xtst\Parens{\diag{\regbp{\vlam}} +
\LAM\dlam{\regbp{\vlam}}} - \wunt\trans{\glam{\rego}}.
\label{eqn:dresd}
\end{equation}
(Consult \eqnref{jprodrule} and \eqnref{diagrule} in \secref{bguide} for
proofs.)  Here and in the following, I use \Xcal, \ycal to refer to the
calibration objects, and \Xtst and \ytst to refer to the test objects of a
single cross validation partitioning.
This reduces the problem to the computation of the Jacobian and gradient of the WPLS
regression vector and intercept with respect to \vlam.  

%

\section{WPLS Computation}

A variant of the canonical WPLS computation is given in \algvref{vanillapls}.
This algorithm is different from the usual formulation in that the vectors 
\Wk, \Tk and \Pk are not normalized; it is simple to show, however, that the 
resultant vectors \Wk, \Tk and \Pk are identical to those produced by the 
canonical computation, \emph{up to scaling}.
The change in scaling does not affect the resultant regression vector,
$\regb$, nor does it change the matrix \Xk.

\begin{algorithm}[htb!]
\caption{Algorithm to compute the WPLS regression vector.\label{alg:vanillapls}}
\alginout{$\ro\cross\cl$ matrix and $\ro$ vector, number of
factors, and a diagonal response weight matrix.}{The regression vector and
intercept.}
\algname{WPLS}{$\Xk[0], \yk[], \fc, \GAM=\diag{\vgam}$}
\begin{algtab}
	$\Tk[0] \gets \wun{}$.\\
	\algforto{$k=0$}{$\fc$}
		$\tk \gets \trans{\Tk}\GAM\Tk.$\hfill(Requires \bigtheta{\ro} flops per loop.)\\
		$\Pk \gets \trans{{\Xk}} \GAM\Tk / \tk.$\hfill(Requires \bigtheta{\ro\cl} flops per
		loop.)\\
		$\qk \gets \trans{{\yk[]}} \GAM\Tk / \tk.$\hfill(Requires \bigtheta{\ro} flops per
		loop.)\\
		\algif{$\trans{\yk[]}\GAM\Tk = 0$}
				Maximum rank achieved; Let $\fc = k$ and break the for loop.\\
		\algend
		\algif{$k < \fc$}
			$\Xk[k+1] \gets \Xk - \Tk \trans{\Pk}.$\hfill(Requires \bigtheta{\ro\cl} flops
			per loop.)\label{algstep:PLSupX}\\
			$\Wk[k+1] \gets \trans{{\Xk[k+1]}}\GAM\yk[].$\hfill(Requires \bigtheta{\ro\cl} flops per loop.)\\
			$\Tk[k+1] \gets \Xk[k+1] \Wk[k+1].$\hfill(Requires \bigtheta{\ro\cl} flops per loop.)\\
		\algend
	\algend
	Let \Wmt be the matrix with columns $\Wk[{\onetox{l}}].$  Similarly
	define \Mtx{\Psym}, \vect{q}.\\
	$\regb \gets \Wmt \invs{\Parens{\trans{\Pmt}\Wmt}} \qvc.$
		(Requires \bigtheta{\cl\fc} flops, using back substitution.)\\
	$\rego \gets \qk[0] - \trans{\Pk[0]}\regb.$\\
\algreturn \tuple{\regb, \rego}
\end{algtab}
\end{algorithm}

I prove some properties of \algref{vanillapls}, nearly all of which hold for the
canonical WPLS algorithm:
\begin{lemma}\label{lem:vplsprops}
Let \Xk, \GAM, \Wk, \Tk, and \Pk be as in \algref{vanillapls}, then:
\begin{compactenum}
\item $\ip{\Wk[k]}{\Pk[k]} = 1$, for $k\ge1$.
\label{item:wkpk}
\item $\Xk[k+j]\Wk[k] = \vz$ for all $j \ge 1$, and $k\ge1$.
\label{item:xwk}
\item $\ip{\Wk[k+j]}{\Wk[k]} = 0$ and $\ip{\Pk[k+j]}{\Wk[k]} = 0$ for all $j
\ge 1$, and $k\ge1$.
\label{item:wkwk}
\item $\trans{\Xk[k+j]}\GAM\Tk[k] = \vz$ for all $j \ge 1$, and $k\ge0$.
\label{item:xtk}
\item $\ip{\Tk[k+j]}{\GAM\Tk[k]} = 0$ for all $j \ge 1$, and $k\ge0$.
\label{item:tktk}
\item $\Wk[k+1] = \Wk[k] - \Pk[k]\ip{\Tk[k]}{\GAM\yk[]}$ for $k\ge1$, and thus
$\Pk[k] \in \operatorname{span}\sngtn{\Wk[k],\Wk[k+1]}.$
\label{item:pkspan}
\item 
$\ip{\Wk[k+j]}{\Pk[k]} = 0$, and
$\ip{\Pk[k+j]}{\Pk[k]} = 0$
for all $j > 1$, and $k\ge1$.  (Note the strict inequality
for $j$.)
\label{item:wklpk}
\end{compactenum}
\end{lemma}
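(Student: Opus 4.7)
The plan is to work through these seven claims in an order that respects their mutual dependencies, reducing each one to the four defining algebraic relations of the algorithm: \(\Wk[k+1] = \trans{\Xk[k+1]}\GAM\yk[]\), \(\Tk = \Xk\Wk\) for \(k\ge 1\), \(\Pk = \trans{\Xk}\GAM\Tk/\tk\), and the deflation \(\Xk[k+1] = \Xk - \Tk\trans{\Pk}\). Item \ref{item:wkpk} is an immediate one-line computation, \(\trans{\Wk}\Pk = \trans{(\Xk\Wk)}\GAM\Tk/\tk = \trans{\Tk}\GAM\Tk/\tk = 1\).

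Items \ref{item:xwk} and \ref{item:wkwk} I would prove together by induction on \(j\) (with \(k\ge 1\) fixed). For the base \(j=1\), deflation gives \(\Xk[k+1]\Wk = \Tk - \Tk\,\ip{\Pk}{\Wk}\), which vanishes by item \ref{item:wkpk}; then \ref{item:wkwk} for \(j=1\) falls out of the definitions of \(\Wk[k+j]\) and \(\Pk[k+j]\) once \ref{item:xwk} is in hand. For the step, expanding \(\Xk[k+j+1]\Wk = \Xk[k+j]\Wk - \Tk[k+j]\,\ip{\Pk[k+j]}{\Wk}\) and applying the inductive hypotheses on \ref{item:xwk} and \ref{item:wkwk} kills both terms, after which \ref{item:wkwk} at \(j+1\) follows from \ref{item:xwk} at \(j+1\) by the same observation as in the base case. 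A structurally identical mutual induction handles \ref{item:xtk} and \ref{item:tktk}: the \(j=1\) base of \ref{item:xtk} reads \(\trans{\Xk[k+1]}\GAM\Tk = \tk\Pk - \Pk\tk = \vz\) directly, and \ref{item:tktk} comes out of \ref{item:xtk} by using \(\Tk[k+j] = \Xk[k+j]\Wk[k+j]\) to rewrite \(\ip{\Tk[k+j]}{\GAM\Tk} = \trans{\Wk[k+j]}\trans{\Xk[k+j]}\GAM\Tk\); the inductive step for \ref{item:xtk} again unfolds the deflation once more and consumes both \ref{item:xtk} and \ref{item:tktk} at the previous index.

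Item \ref{item:pkspan} is another one-line expansion: \(\Wk[k+1] = \trans{(\Xk - \Tk\trans{\Pk})}\GAM\yk[] = \Wk - \Pk\,\ip{\Tk}{\GAM\yk[]}\), and solving for \(\Pk\) exhibits it as an element of \(\operatorname{span}\sngtn{\Wk,\Wk[k+1]}\), valid precisely because \(\ip{\Tk}{\GAM\yk[]}\ne 0\) is the algorithm's non-termination condition at step \(k\). Item \ref{item:wklpk} then follows for free from \ref{item:wkwk} and \ref{item:pkspan}: substituting \(\Pk = (\Wk - \Wk[k+1])/\ip{\Tk}{\GAM\yk[]}\) into the two inner products reduces each to a linear combination of inner products of \(\Wk[k+j]\) or \(\Pk[k+j]\) against \(\Wk\) and \(\Wk[k+1]\), and the strict inequality \(j>1\) is exactly what makes both shifted index pairs \((k,j)\) and \((k+1,j-1)\) fall under \ref{item:wkwk}.

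The main obstacle throughout is simple bookkeeping — organizing the two mutual inductions in \ref{item:xwk}/\ref{item:wkwk} and \ref{item:xtk}/\ref{item:tktk} so that the claim being proven at step \(j+1\) never silently consumes itself, and keeping straight that \ref{item:xwk}, \ref{item:wkwk}, \ref{item:pkspan}, \ref{item:wklpk} are only meaningful for \(k\ge 1\) whereas \ref{item:xtk} and \ref{item:tktk} permit \(k=0\) because \(\Tk[0]=\wun{}\) is defined without reference to any \(\Wk[0]\). No individual step is more than a few lines of algebra; the content of the lemma lies entirely in how tightly the deflation recursion forces these orthogonalities.
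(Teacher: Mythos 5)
Your proof is correct and follows essentially the same route as the paper's: every item is reduced to the deflation identity $\Xk[k+1]=\Xk-\Tk\trans{\Pk}$, the $j=1$ base cases are computed identically, and \ref{item:wklpk} rests on the same combination of \ref{item:pkspan}, \ref{item:wkwk}, and the non-vanishing of $\trans{\Tk}\GAM\yk[]$ guaranteed by the algorithm's termination test. The only organizational difference is that the paper dispatches the $j>1$ cases of \ref{item:xwk} and \ref{item:xtk} without induction, by writing $\Xk[k+j]$ as an accumulated product of update matrices applied to $\Xk[k+1]$ (so its proof of \ref{item:xwk} never consumes \ref{item:wkwk}), whereas you run explicit mutual inductions on $j$; both are sound.
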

\begin{proof}
First note that the update of \Xk is given by 
\[\Xk[k+1] = \Xk - \Tk \trans{\Pk} =
\Parens{\eye - \frac{\Tk\trans{\Tk}\GAM}{\tk}} \Xk = \Mk\Xk.\]
It can also be expressed as 
\[\Xk[k+1] = \Xk - \Tk \trans{\Pk} = \Xk \Parens{\eye - \Wk\trans{\Pk}} =
\Xk\Lk,\]
for $k>0$.
Now the parts of the lemma:
\begin{compactenum}
\item[\bf \ref{item:wkpk}]%
\[\ip{\Wk}{\Pk} = \frac{\trans{\Wk}\trans{\Xk}\GAM\Tk}{\tk} =
\frac{\trans{\Tk}{\GAM\Tk}}{\tk} = \frac{\tk}{\tk} = 1.\]
\item[\bf \ref{item:xwk}] 
First I prove $\Xk[k+1]\Wk = \vz$ as follows:
\begin{align*}
\Xk[k+1]\Wk &= \Xk \Parens{\eye - \Wk\trans{\Pk}}\Wk\\
&=\Xk \Parens{\Wk - \Wk \ip{\Pk}{\Wk}} = \Xk \Parens{\Wk - \Wk} = \vz,
\end{align*} which follows because
$\ip{\Pk}{\Wk} = 1$
To prove for general $j,$ we note that 
$\Xk[k+j] = \Mk[k+j-1]\Xk[k+j-1] = \Mk[k+j-1]\Mk[k+j-2]\Xk[k+j-2] = \ldots =
\Mtx{M} \Xk[k+1]$, so 
$\Xk[k+j]\Wk = \Mtx{M}\Xk[k+1]\Wk = \Mtx{M}\vz = \vz.$
\item[\bf \ref{item:wkwk}] This is trivial because, for example, 
$\ip{\Wk[k+j]}{\Wk} = \trans{\yk[]}\GAM\Xk[k+j]\Wk = \trans{\yk[]}\GAM\vz$, following
from the previous part.  Similarly for \ip{\Pk[k+j]}{\Wk}.
\item[\bf \ref{item:xtk}] Again I prove for $j=1$ first:
\begin{align*}
\trans{\Xk[k+1]}\GAM\Tk &= \trans{\Xk} \trans{\Parens{\eye -
\frac{\Tk\trans{\Tk}\GAM}{\tk}}}\GAM\Tk\\
&= \trans{\Xk} \Parens{\GAM\Tk - \frac{\GAM\Tk\trans{\Tk}\GAM\Tk}{\tk}}\\
&= \trans{\Xk} \GAM\Parens{\Tk - \frac{\Tk\tk}{\tk}}
= \trans{\Xk} \GAM\Parens{\Tk - \Tk} = \vz.
\end{align*}
Similarly to above we can prove for $j > 1$ by writing $\Xk[k+j] = \Xk[k+1]\Mtx{L}$.
\item[\bf \ref{item:tktk}] By the previous result, $\ip{\Tk[k+j]}{\GAM\Tk} =
\trans{\Wk[k+j]}\trans{\Xk[k+j]}\GAM\Tk = 
\trans{\Wk[k+j]}\vz = 0$.
\item[\bf \ref{item:pkspan}] 
This is by simple definition:
\[\Wk[k+1] = \trans{\Xk[k+1]}\GAM\yk[] = \trans{\Parens{\Xk[k] -
\Tk[k]\trans{\Pk[k]}}}\GAM\yk[] = \Wk[k] - \Pk[k]\trans{\Tk[k]}\GAM\yk[].\]
Thus by rearranging, \Pk[k] is a linear combination of \Wk[k] and \Wk[k+1].
%

\item[\bf \ref{item:wklpk}] 
From \itemref{pkspan}: 
$\ip{\Wk[k+j]}{\Pk} = \ip{\Wk[k+j]}{\Parens{\sclr[1]\Wk[k] + \sclr[2]\Wk[k+1]}}.$
Since $j>1$, by orthogonality of the \Wk[] (\itemref{wkwk}), the right hand
side is zero, as needed.

For the \Pk[], first use \itemref{wkwk} to assert
$0 = \ip{\Pk[k+j]}{\Wk[k+1]},$ then rewrite \Xk[k+1] in \Wk[k+1]:
\begin{align*}
0 &= \ip{\Pk[k+j]}{\Wk[k+1]} 
= \ip{\Pk[k+j]}{\Bracks{\trans{\Parens{\Xk - \Tk\trans{\Pk}}}\GAM\yk[]}}\\
0 &= \ip{\Pk[k+j]}{\Bracks{\Wk - \Pk\trans{\Tk}\GAM\yk[]}}
= \ip{\Pk[k+j]}{\Wk} - \ip{\Pk[k+j]}{\Pk\trans{\Tk}\GAM\yk[]}\\
0 &= 0 - \Parens{\ip{\Pk[k+j]}{\Pk}}\trans{\Tk}\GAM\yk[],
\end{align*}
and thus either $\ip{\Pk[k+j]}{\Pk} = 0$ as desired or 
$\trans{\Tk}\GAM\yk[] = 0$.  The algorithm detects this possibility and terminates
if it holds.

\end{compactenum}
\end{proof}

Thus, as in canonical WPLS, the matrix $\trans{\Pmt}\Wmt$ is bidiagonal upper
triangular; however, for this variant, the matrix has unit main diagonal.
This variant of the algorithm is more amenable to a Jacobian computation,
although conceivably it could be susceptible to numerical underflow or
overflow.

\section{A Rank Sensitive Implicit WPLS Algorithm}

I now present a different way of computing the same results as
\algref{vanillapls}, but by reusing old computations to compute seemingly
unnecessary intermediate quantities which will be useful in the Jacobian computation.  
Moreover, the Jacobian computation will exploit the assumption that $\ro\ll\cl$ to gain 
an asymptotic reduction in runtime.  This is achieved by performing the
computations in the $\ro$-dimensional space, that is in the quantities related
to \yk[] and \Tk, and avoiding the $\cl$-dimensional quantities \Wk and \Pk.

The variant algorithm introduces the ``preimage'' vectors \Vk and the preimage
of the regression coefficient $\rega$.  By preimage, I mean in
$\trans{\Xk[0]}\GAM$, thus, as
an invariant, these vectors will satisfy $\Wk = \trans{\Xk[0]}\GAM\Vk$ and $\regb =
\trans{\Xk[0]}\GAM\rega.$
The variant algorithm also computes the vectors $\Qk[k] =
{\Xk[0]\trans{\Xk[0]}\GAM} \Vk[k]$, and $\Uk[k] = \Qk[k] -
\Tk[0]\trans{\Tk[0]}\GAM\Qk[k]/\tk[0]$, and the
scalars 
$\rk = \ip{\yk[]}\GAM{\Tk}$ and $\wk = \trans{\Tk}\GAM\Uk[k+1]/\tk$.
%

Note that
any explicit updating of the matrix \Xk is absent from this version of the algorithm,
rather the updating is performed implicitly.  This will facilitate the
computation of the Jacobian when \Xk[0] is replaced in the sequel by
$\Xk[0]\LAM$.  The following lemma confirms that this variant form of the
algorithm produces the same results as \algref{vanillapls}, that is the same
vectors \Tk[] and \vect{\qk[]}, consistent vectors \Vk[], and produces the same
affine model \regmodel.

\begin{lemma}\label{lem:oddplsprops}
Let \Xk, \GAM, \Wk, \Tk, \Pk, and \qk be as in \algref{vanillapls}.  
Let \Vk be the preimage of \Wk, \ie \Vk is a vector such that $\Wk =
\trans{\Xk[0]}\GAM\Vk$.  
Let $\Qk[k] = {\Xk[0]\trans{\Xk[0]}\GAM} \Vk[k]$, and 
$\Uk[k] = \Qk[k] - \Tk[0]\trans{\Tk[0]}\GAM\Qk[k]/\tk[0]$ for $k\ge1$.
Let
$\rk = \ip{\yk[]}\GAM{\Tk}$, and $\wk = \trans{\Tk}\GAM\Uk[k+1]/\tk$ for
$k\ge0.$
%
Then the following hold:
\begin{compactenum}
\item $\Pk = \trans{\Xk[0]}\GAM\Tk / \tk,$ for $k\ge0$.
\label{item:repk}
\item $\wk = \ip{\Pk[k]}{\Wk[k+1]}$ for $k\ge1$, and $\wk[0] = 0$.
\label{item:defwk}
\item $\Tk[k+1] = \Uk[k+1] - \wk[k] \Tk[k]$ for $k \ge 0.$
\label{item:retk}
\item 
$\Vk[k+1] = \Vk[k] - \qk[k]\Tk[k]$ for $k\ge0$,
where $\Vk[0] = \yk[]$ by convention.
\label{item:revk}
\end{compactenum}
\end{lemma}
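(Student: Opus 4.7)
The plan is to establish the four items in the stated order, since Parts~\ref{item:defwk}--\ref{item:revk} each invoke Part~\ref{item:repk}. For Part~\ref{item:repk}, I would telescope the update $\Xk[k+1] = \Xk - \Tk\trans{\Pk}$ from \algref{vanillapls} to write $\Xk[0] = \Xk + \sum_{j=0}^{k-1}\Tk[j]\trans{\Pk[j]}$, then right-multiply by $\GAM\Tk$: the cross terms $\Pk[j]\trans{\Tk[j]}\GAM\Tk$ all vanish by the $\GAM$-orthogonality of the score vectors (item~\ref{item:tktk} of Lemma~\ref{lem:vplsprops}), leaving $\trans{\Xk[0]}\GAM\Tk = \trans{\Xk}\GAM\Tk = \tk\Pk$ by the algorithm's definition of $\Pk$. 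For Part~\ref{item:defwk}, I would expand $\wk\tk = \trans{\Tk}\GAM\Uk[k+1]$ using $\Uk[k+1] = \Qk[k+1] - \Tk[0]\trans{\Tk[0]}\GAM\Qk[k+1]/\tk[0]$ together with $\Qk[k+1] = \Xk[0]\Wk[k+1]$. For $k\ge 1$ the centering correction drops out because $\trans{\Tk}\GAM\Tk[0] = 0$ (item~\ref{item:tktk}), giving $\wk\tk = \trans{\Tk}\GAM\Xk[0]\Wk[k+1] = \tk\trans{\Pk}\Wk[k+1]$ by Part~\ref{item:repk}; the case $k=0$ collapses directly via $\trans{\Tk[0]}\GAM\Qk[1] - \tk[0](\trans{\Tk[0]}\GAM\Qk[1])/\tk[0] = 0$.

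Part~\ref{item:retk} is the crucial observation. Since $\Xk[1] = \Xk[0] - \Tk[0]\trans{\Pk[0]} = (\eye - \Tk[0]\trans{\Tk[0]}\GAM/\tk[0])\Xk[0]$, one may recognize that $\Uk[k+1] = \Xk[1]\Wk[k+1]$. Combining with the algorithm's $\Tk[k+1] = \Xk[k+1]\Wk[k+1]$ yields $\Uk[k+1] - \Tk[k+1] = (\Xk[1] - \Xk[k+1])\Wk[k+1] = \sum_{j=1}^{k}\Tk[j]\trans{\Pk[j]}\Wk[k+1]$ by telescoping. Item~\ref{item:pkspan} places $\Pk[j]\in\operatorname{span}\{\Wk[j],\Wk[j+1]\}$, and for $j<k$ both spanning vectors are orthogonal to $\Wk[k+1]$ by item~\ref{item:wkwk}, so only the $j=k$ summand survives; it equals $\wk\Tk[k]$ by Part~\ref{item:defwk}. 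The $k=0$ case is the empty sum combined with $\wk[0]=0$.

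For Part~\ref{item:revk} I would proceed by induction on $k$, with inductive hypothesis $\Wk[k] = \trans{\Xk[0]}\GAM\Vk[k]$. Applying $\trans{\Xk[0]}\GAM$ to $\Vk[k] - \qk[k]\Tk[k]$ and invoking Part~\ref{item:repk} yields $\Wk[k] - \qk[k]\tk[k]\Pk[k] = \Wk[k] - \Pk[k]\trans{\Tk[k]}\GAM\yk[]$, since $\qk[k]\tk[k] = \trans{\yk[]}\GAM\Tk[k]$ is a scalar that may be commuted past $\Pk[k]$. Independently, the algorithm's update gives $\Wk[k+1] = \trans{(\Xk[k] - \Tk[k]\trans{\Pk[k]})}\GAM\yk[] = \Wk[k] - \Pk[k]\trans{\Tk[k]}\GAM\yk[]$, closing the induction.

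I expect the main obstacle to be keeping the boundary case $k=0$ straight throughout: $\Tk[0]=\wun{}$ plays a distinguished role as the centering direction rather than as an ordinary score, several items of Lemma~\ref{lem:vplsprops} are stated only for $k\ge 1$, and $\Vk[0]=\yk[]$ is assigned by convention rather than as a preimage of any existing $\Wk[0]$. In particular, the base step for Part~\ref{item:revk} requires a separate direct check that $\trans{\Xk[0]}\GAM(\yk[] - \qk[0]\wun{}) = \Wk[1]$, which falls out after substituting the definitions of $\qk[0]$ and $\Pk[0]$ and observing that the resulting scalars coincide.
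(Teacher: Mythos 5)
Your proposal is correct and takes essentially the same route as the paper's proof: item 1 by telescoping the deflation and invoking the $\GAM$-orthogonality of the score vectors, item 2 by expanding $\Uk[k+1]$ and applying item 1, item 3 by expanding $\Xk[k+1]\Wk[k+1]$ and discarding the intermediate terms via the orthogonality facts of Lemma~\ref{lem:vplsprops}, and item 4 by pulling the $\Wk$ recurrence back through $\trans{\Xk[0]}\GAM$. Your repackaging of item 3 through the identity $\Uk[k+1]=\Xk[1]\Wk[k+1]$ and your explicit inductive framing of item 4 are only cosmetic differences from what the paper does.
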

\begin{proof}
The parts of the lemma:
\begin{compactenum}

\item[\bf \ref{item:repk}]%
Rewrite \Xk, then use the fact that the \Tk[] are \GAM-orthogonal 
(\lemref{vplsprops} \itemref{tktk}):
\begin{align*}
\Pk &= \trans{\Xk}\GAM\Tk / \tk = 
	\trans{\Parens{\Xk[0] - \Tk[0]\trans{\Pk[0]} - \ldots 
	- \Tk[k-1]\trans{\Pk[k-1]}}}\GAM\Tk / \tk\\
 &= \Parens{\trans{\Xk[0]} - \Pk[0]\trans{\Tk[0]} - \ldots
	- \Pk[k-1]\trans{\Tk[k-1]}}\GAM\Tk / \tk\\
 &= \trans{\Xk[0]}\GAM\Tk / \tk.
\end{align*}

\item[\bf \ref{item:defwk}]%
By definition, and \GAM-orthogonality of the \Tk[]:
\begin{align*}
\wk &= \trans{\Tk}\GAM\Uk[k+1]/\tk 
	= \trans{\Tk}\GAM\Parens{\Qk[k+1] - \Tk[0]\sclr{}}/\tk 
	= \trans{\Tk}\GAM\Qk[k+1]/\tk,\\
 &= \trans{\Tk}\GAM \Xk[0]\trans{\Xk[0]}\GAM \Vk[k+1]/\tk
  = \Parens{\trans{\Tk}\GAM \Xk[0]/\tk}\Parens{\trans{\Xk[0]}\GAM \Vk[k+1]}\\
 &= \trans{\Pk}\Wk[k+1],\quad\text{using \itemref{repk}.}
\end{align*}
To show $\wk[0] = 0$ it suffices to note that \Uk is chosen to be
\GAM-orthogonal to \Tk[0]:
\[\trans{\Tk[0]}\GAM\Uk 
= \trans{\Tk[0]}\GAM\Parens{\Qk - \Tk[0]\trans{\Tk[0]}\GAM\Qk/\tk[0]}
= \trans{\Tk[0]}\GAM\Qk - \Parens{\tk[0]/\tk[0]}\trans{\Tk[0]}\GAM\Qk.
\]
\item[\bf \ref{item:retk}]%
For $k>0$, rewrite \Xk[k+1]:
\begin{align*}%
\Tk[k+1]&= \Xk[k+1] \Wk[k+1]\\
&= \Parens{\Xk[0] - \Tk[0]\trans{\Pk[0]} - \ldots - \Tk[k-1]\trans{\Pk[k-1]} -
\Tk[k]\trans{\Pk[k]}} \Wk[k+1],\\
&= \Xk[0]\Wk[k+1] - \Tk[0]\trans{\Pk[0]}\Wk[k+1] - \Tk[k]\trans{\Pk[k]} \Wk[k+1],
	\quad\text{(\lemref{vplsprops} \itemref{wklpk}),}\\
&= \Qk[k+1] - \Tk[0]\trans{\Tk[0]}\Xk[0]\Wk[k+1]/\tk[0] - \Tk[k]\wk[k],
\quad\text{(using \itemref{defwk}),}\\
&= \Uk[k+1] - \Tk[k]\wk[k].
\end{align*}
For $k=0$, since $\wk[0] = 0,$ it suffices to show $\Tk[1] = \Uk[1].$
Rewriting \Xk[1]:
\begin{align*}%
\Tk[1]&= \Xk[1] \Wk[1] = \Xk[0]\Wk[1] - \Tk[0]\trans{\Pk[0]}\Wk[1]\\
&= \Qk[1] - \Tk[0]\trans{\Tk[0]}\Xk[0]\Wk[1]/\tk[0]
= \Uk[1].
\end{align*}
\item[\bf \ref{item:revk}]%
First, for $k > 0,$ restate \itemref{pkspan} of \lemref{vplsprops}:
$\Wk[k+1] = \Wk[k] - \Pk[k]\trans{\Tk[k]}\GAM\yk[]$.  Factoring to
preimages using \itemref{repk} gives
$\Vk[k+1] = \Vk[k] - \Tk[k]\trans{\Tk[k]}\GAM\yk[]/\tk[k]$.  
The definition of \qk[k] then gives the result.

For $k=0,$ rewrite \Xk[1]:
\begin{align*}
\Wk[1] &= \trans{\Parens{\Xk[0] - \Tk[0]\trans{\Pk[0]}}}\GAM\yk[]
= \trans{\Xk[0]}\GAM\yk[] -
\trans{\Xk[0]}\GAM\Tk[0]\Tk[0]\GAM\yk[]/\tk[0],\quad\text{thus}\\
\Vk[1] &= \yk[] - \Tk[0]\qk[0].
\end{align*}%
\end{compactenum}
\end{proof}
For concreteness, I present the WPLS algorithm via intermediate calculations as
\algvref{oddpls}.
\begin{algorithm}[htb!]
\caption{Algorithm to compute the WPLS regression vector, with \Xk implicit.\label{alg:oddpls}}
\alginout{Matrix and vector, factors, and diagonal response weight matrix.}{The
regression vector and intercept.}
\algname{implicitWPLS}{$\Xk[0], \yk[], \fc, \GAM=\diag{\vgam}$}
\begin{algtab}
	$\Mij{}{} \gets \eye[\fc],$ the $\fc\cross\fc$ identity matrix.\\
	Precompute $\Xk[0]\trans{\Xk[0]}\GAM$.\\
	$\Tk[0] \gets \wun{}, \Vk[0] \gets \yk[]$.\\
	\algforto{$k=0$}{$\fc$}
		$\rk \gets \trans{\yk[]}\GAM\Tk[k].$\\
		\algif{$\rk = 0$}
			Full rank achieved; Let $\fc = k$ and break the for loop.\\
		\algend
		$\tk \gets \trans{\Tk}\GAM\Tk.$\\
		$\qk \gets \rk / \tk.$\\
		\algif{$k < \fc$}
			$\Vk[k+1] \gets \Vk[k] - \qk[k]\Tk[k].$\\
			Let $\Qk[k+1] \gets \Parens{\Xk[0]\trans{\Xk[0]}\GAM} \Vk[k+1]$.\\
			Let $\Uk[k+1] \gets \Qk[k+1] - \Tk[0]\trans{\Tk[0]}\GAM\Qk[k+1]/\tk[0]$.\\
			$\wk[k] \gets \trans{\Tk[k]} \GAM \Uk[k+1] / \tk[k]$.\\
			$\Tk[k+1] \gets \Uk[k+1] - \wk[k]\Tk[k]$.\\
			\algif{$k > 0$}
				$\Mij{k}{k+1} \gets \wk[k]$.\\
			\algend
		\algend
	\algend
	Let \Mtx{V} be the matrix with columns $\Vk[{\onetox{\fc}}].$  Similarly
	define \vect{q}.\\
	$\rega \gets \Mtx{V} \invs{\Mij{}{}} \vect{q},$
	$\regb \gets \trans{\Xk[0]}\GAM\rega,$
	$\rego \gets \qk[0] - \trans{\Tk[0]}\GAM\Xk[0] \regb / \tk[0].$\\
\algreturn \tuple{\regb, \rego}.
\end{algtab}
\end{algorithm}

\section{WPLS Computation with Jacobian}

Now I amend \algref{oddpls} with derivative computations to create an 
algorithm that computes the
regression coefficient for input $\Xk[1]\LAM,$ and \yk[1], and returns the
preimage of the regression vector, $\rega,$ as well as its Jacobian \drega, and
the gradient of the intercept, \glam{\rego}.
This is given as \algvref{fullonpls}.  In addition to the intermediate quantities
used in \algref{oddpls}, this algorithm also computes some intermediate
derivatives, some of which need to be stored until the end of the computation.
The required derivatives are \dlam{\Vk[k]}, \glam{\qk[k]} and \glam{\wk[k]} for
$k\ge1$,  and \glam{\rk[k]}, \dlam{\Uk[k]}, \glam{\tk[k]}, and \dlam{\Tk[k]} for the
most recent $k$.

\begin{algorithm}[htbp!]
\caption{Algorithm to compute the WPLS regression vector and 
Jacobian.\label{alg:fullonpls}}
\alginout{Predictor and response, factors, response weights and predictor 
weights.}{The preimage of the regression vector and its Jacobian, and the
intercept and its gradient.}
\algname{WPLSandJacobian}{$\Xk[1], \yk[], \fc, \vgam, \vlam$}
\begin{algtab}
	Precompute $\XLXG$.\\
	$\Tk[0] \gets \wun{}, \dlam{\Tk[0]} \gets \Mz, \Vk[0] \gets \yk[],
	\dlam{\Vk[0]} \gets \Mz$.\\
	\algforto{$k=0$}{$\fc$}
		$\rk \gets \trans{\yk[]}\GAM\Tk[k], \glam{\rk} \gets
		\trans{\dlam{\Tk[k]}}\GAM\yk[].$\\
		\algif{$\rk = 0$}
			Full rank achieved; Let $\fc = k$ and break the for loop.\\
		\algend
		$\tk \gets \trans{\Tk}\GAM\Tk, \glam{\tk} \gets 2 \trans{\dlam{\Tk[k]}}
		\GAM\Tk[k].$\\
		$\qk \gets \rk / \tk, \glam{\qk} \gets \wrapparens{\tk\glam{\rk} -
		\rk\glam{\tk}}/\tk^2.$\\
		\algif{$k < \fc$}
			$\Vk[k+1] \gets \Vk[k] - \qk[k]\Tk[k],$
			\newline $\dlam{\Vk[k+1]} \gets \dlam{\Vk[k]} - \qk[k]\dlam{\Tk[k]} -
			\Tk[k]\trans{\glam{\qk[k]}}.$\\
			Let $\Qk[k+1] \gets \Parens{\XLXG} \Vk[k+1]$,\newline
			$\dlam{\Qk[k+1]} \gets \XLXG \dlam{\Vk[k+1]} + 2 \Xk[0] \LAM
			\diag{\trans{\Xk[0]}\GAM\Vk[k+1]}.$\\
			Let $\Uk[k+1] \gets \Qk[k+1] - \Tk[0]\trans{\Tk[0]}\GAM\Qk[k+1]/\tk[0]$,\newline
			$\dlam{\Uk[k+1]} \gets \dlam{\Qk[k+1]} -
			\Tk[0]\trans{\Tk[0]}\GAM\dlam{\Qk[k+1]}/\tk[0]$.\\
			$\wk[k] \gets \trans{\Tk[k]} \GAM \Uk[k+1] / \tk[k]$,
			\newline $\glam{\wk[k]} \gets
			\Parens{\trans{\dlam{\Tk[k]}}\GAM\Uk[k+1] +
			\trans{\dlam{\Uk[k+1]}}\GAM\Tk[k] - \wk[k] \glam{\tk[k]}} / \tk$.\\
			$\Tk[k+1] \gets \Uk[k+1] - \wk[k]\Tk[k]$,
			\newline $\dlam{\Tk[k+1]} \gets \dlam{\Uk[k+1]} - \wk[k]\dlam{\Tk[k]} -
			\Tk[k]\trans{\glam{\wk[k]}}.$\\
		\algend
	\algend
	Let $\vk[l] \gets \qk[l]$, $\glam{\vk[l]} \gets \glam{\qk[l]}$.\\
	\algforto{$\jx=l-1$}{$1$}
		$\vk[\jx] \gets \qk[\jx] - \wk[\jx] \vk[\jx+1]$,\newline
		$\glam{\vk[\jx]} \gets \glam{\qk[\jx]} - \wk[\jx] \glam{\vk[\jx+1]} - \vk[\jx+1]
		\glam{\wk[\jx]}$.\\
	\algend
	$\rega \gets \Mtx{\Vsym} \vect{\vk[]},$
	$\drega \gets \Mtx{\Vsym} \dlam{\vect{\vk[]}}.$\\
	\algforto{$\jx=1$}{$l$}
		\label{algstep:FOfaketensor}
		$\drega \gets \drega + \vk[\jx] \dlam{\Vk[\jx]}.$\\
	\algend
	$\rego \gets \trans{\Tk[0]}\GAM\Parens{\yk[] - \XLXG \rega} / \tk[0]$,
	\newline $\dlam{\rego} \gets -\trans{\Parens{ \XLXG \drega + 2
	\Xk[0]\LAM\diag{\trans{\Xk[0]}\GAM\rega} }} \GAM\Tk[0]/\tk[0].$\\

\algreturn \tuple{\rega,\drega,\rego,\glam{\rego}}.
\end{algtab}
\end{algorithm}

Careful inspection and the vector calculus rules outlined in \secref{bguide}
are all that is required to verify that \algref{fullonpls} correctly computes
the Jacobian of the model $\regb$.  The only theoretical complication in the
transformation of \algref{oddpls} to \algref{fullonpls} is the explicit formulation
of the back substitution to compute $\vect{\vk[]} = \invs{\Mij{}{}} \vect{q}.$
Given that \Mij{}{} is upper triangular, bidiagonal with unit diagonal,
inspection reveals that the back substitution in \algref{fullonpls} is computed
correctly.

Inspection of \algref{vanillapls} reveals that WPLS computation requires
\bigtheta{\ro\cl\fc} floating point operations, where \Xk[1] is $\ro
\cross\cl$, and $\fc$ is the ultimate number of factors used. Thus a numerical
approximation to the Jacobian using $\cl$ evaluations of \algref{vanillapls}
gives an algorithm with asymptotic runtime of \bigtheta{\ro\cl^2\fc}.
Inspection of \algref{fullonpls} reveals that it computes the Jacobian exactly
in \bigtheta{\ro^2\cl\fc}.   The runtime limiting operation is the 
multiplication $\Parens{\XLXG}\dlam{\Vk[k+1]}$ in the calculation of
\dlam{\Uk[k+1]}, with runtime of \bigtheta{\ro^2\cl} per loop.

It would appear that one would incur a further cost of \bigtheta{\ro\cl^2} in
the conversion of \drega to \dregb, as it requires the multiplication
$\LAM\trans{\Xk[1]}\GAM\drega$.  However, this can be avoided if the ultimate goal
is computation of the Jacobian of the residual, rather than the Jacobian of the
regression coefficients.  Referring back to \eqnref{dresd}
, we have
\begin{align*}
\dlam{\resd} + \wunt\trans{\glam{\rego}}
&= - \Xtst\Parens{\diag{\regbp{\vlam}} +
\LAM\dlam{\regbp{\vlam}}},\\
&= - \Xtst\Parens{\diag{\LAM\trans{\Xcal}\GAM\regap{\vlam}} +
	\LAM\dlam{\LAM\trans{\Xcal}\GAM\regap{\vlam}}},\\
&= - \Xtst\Parens{\diag{\LAM\trans{\Xcal}\GAM\regap{\vlam}} +
	\LAM\diag{\trans{\Xcal}\GAM\regap{\vlam}} +
	\LAMk2\trans{\Xcal}\GAM\dlam{\regap{\vlam}}},\\
&= - 2\Xtst\diag{\LAM\trans{\Xcal}\GAM\regap{\vlam}} -
	\Parens{\Xtst\LAMk2\trans{\Xcal}}\GAM\dlam{\regap{\vlam}}.
\end{align*}
Letting \rotst be the number of objects in the test group, the
multiplication 
$\Xtst\LAMk2\trans{\Xcal}$ requires \bigo{\ro\rotst\cl} flops, and the
multiplication $\Parens{\Xtst\LAMk2\trans{\Xcal}}\GAM\dlam{\regap{\vlam}}$ also 
requires \bigo{\ro\rotst\cl} flops.  Thus the computation of \dlam{\resd} can be
done with \bigo{\ro\rotst\cl + \ro^2\cl\fc} flops, which is linear in $\cl$.

For concreteness, the residual computation with analytic Jacobian was coded and
compared for accuracy and speed against a ``slow'' analytic version (one which 
does not exploit the reduced rank in the Jacobian computation) and a
numerical approximation to the Jacobian.  Run times are compared in
\figref{comprtimes} for varying number of channels; the difference in
asymptotic behavior with respect to $\cl$ is evident.
For the case of $40$ calibration objects and $10$ test objects generated
randomly with $2000$ channels, the fast analytic computation of residual
Jacobian took about $1.7$ seconds, the slow analytic took about $44$ seconds, and the
numeric approximation took about $84$ seconds on the platform tested (see
\secref{impnotes} for details).
Note that the ``slow'' analytic version is actually preferred in the case that
$\ro \ge \cl,$ as it runs in time \bigtheta{\ro\cl^2\fc}.  However, in
spectroscopy it is usually the case that $\ro\ll\cl$.


\begin{figure}[htb!]
\centering
	\psfrag{na}[rb][rb]{fast a}
	\psfrag{sa}[rb][rb]{slow a}
	\psfrag{nu}[rb][rb]{numeric}
	\psfrag{n}[][][2]{$\cl$}
	\psfrag{t}[][][2][0]{time (secs)}
	\includegraphics[angle=270,width=.85\columnwidth]{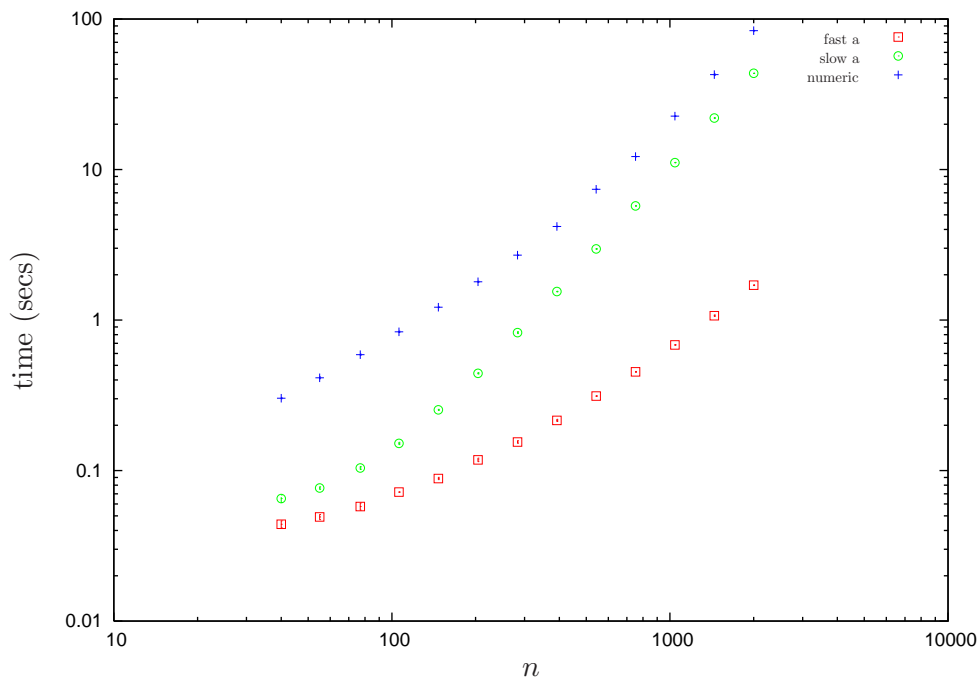}
\caption{Run times of the fast analytical computation of the
Jacobian of the residual are compared against a slow analytic and a numerical
approximation.  The number of channels, $\cl$ is shown in the horizontal axis,
while the vertical axis is CPU time in seconds.  The number of objects in the
calibration and test groups remained constant, at $40$ and $10$ throughout, as
did the number of PLS factors, $5$.  Times are the mean of seven runs, and
single standard deviation bars are plotted, although they are mostly too small
to see.  See \secref{impnotes} for details on the tested platform.}\label{fig:comprtimes} 
\end{figure}

\section{The BFGS Algorithm}

The BFGS algorithm\footnote{Named for its discoverers, Broyden, Fletcher,
Goldfarb and Shanno.} is a quasi-Newton optimization algorithm.  That is, the algorithm
models a scalar function of many variables by a quadratic function with an
approximate Hessian.  The approximation to the Hessian is improved at each step
by two rank one updates.  The BFGS algorithm enjoys a number of properties which
make it attractive to the numerical analyst:  provable superlinear global
convergence for some convex optimization problems;  provable superlinear local 
convergence for some nonconvex problems;  robustness and good performance in
practice; deterministic formulation; relative simplicity of implementation;
and, perhaps most importantly to the practical analyst, the algorithm has been
implemented in a number of widely available libraries and packages, many of
which accept the objective function as a blackbox.
\cite{NjWsj1999,nocedal91theory}

The BFGS algorithm is an iterative solver.  That is, it starts with some
initial estimate of a good \vlam, say \vlamk[0], and produces successive
estimates, \vlamk, which are supposed to converge to a local minimizer of the
objective function.  Each iteration consists of a computation of the gradient
of the objective at \vlamk.  The algorithm constructs a search direction,
call it \vrok, by multiplying the inverse approximate Hessian by the negative gradient.
Then a line search is performed to find an acceptable step in the search
direction, that is to find the \salk used to construct
$\vlamk[k+1] = \vlamk[k] + \salk[k] \vrok[k].$
In the backtracking algorithm used to perform line search described by Nocedal
and Wright, a number of prospective values of \salk may be tested; the
objective function must be computed for each prospective value, but the
gradient need not be computed. \cite[Algorithm 3.1]{NjWsj1999}  A fast
implementation of the BFGS algorithm should not query the blackbox function for
gradients during the backtracking phase.
\nocite{MjjTdj1992}

As mentioned above, the BFGS requires some initial estimate of the Hessian of
the objective function.  When a good initial estimate of the Hessian is
impractical, the practical analyst punts, and resorts to the identity matrix.
Under this choice, the first search direction is the negative gradient, \ie the
direction of steepest descent.  The BFGS constructs better estimates of the
Hessian by local measurement of the curvature of the objective function.

Depending on the implementation, the BFGS algorithm may have to store the
approximate Hessian of the objective function or the inverse approximate
Hessian.  In either case, the storage requirement is \omeg{\cl^2}.   To avoid
this, one can use the limited memory BFGS algorithm, which approximates the
Hessian by a fixed number of the previous iterative updates, which avoids the
need for quadratic storage.  This method evidently works as well as BFGS in
practice for many problems.  \cite{liu89limited,gill97limitedmemory,nocedal91theory}


\section{Selecting Wavelengths from an Optimal \vlam}
\label{sec:unembed}

Once a predictor weighting \vlam has been found which gives a small \RMSECV{},
one must use the \vlam to select a subset of the channels.  That is, one must
reverse the embedding, finding a subset of the channels in the discrete space
of all such subsets which somehow approximates the continuous solution given by
\vlam.  Without loss of generality, one may assume that \vlam has unit norm, \ie
$\norm{\vlam} = 1$, since the effective WPLS
regression vector is invariant under scaling, \ie 
$\sclr\LAM\regbp{\sclr \vlam}$ is constant for all nonzero values of
$\sclr.$  This latter fact is proved by considering the output of the
canonical WPLS algorithm, which normalizes the vectors \Wk and \Tk.
Moreover, I assume that the elements of \vlam are nonnegative, again without
loss of generality.

Clearly, the weightings of the channels somehow signify their importance, and
can be used in the selection of a subset of the channels. 
The ordering in significance indicated by \vlam suggests $\cl$ different possible
choices of subsets\footnote{To be fair, the trivial model, which estimates the
\yk[] values as constant, should also be considered.}, the \kth{\ms} of which is
the subset with the $\ms$ most significant channels.  If the acceptable number of
channels is bounded by an external restriction, say an upper bound of $\cl_f,$ then 
one should select the subset of the $\cl_f$ most significant channels.
Without any external restrictions, one should select the subset of channels (or
``model'') which minimizes some measure of predictive quality, such as \RMSECV
or an information criterion like Schwarz' Bayesian Information Criterion
(\BIC).
%

The asymptotic (in $\ro$) consistency of model selection criteria was examined by
Shao. \cite{Shao:1993:LMS,Shao:1997:ATLMS}  A number of differences exist
between the formulation studied by Shao and that presented here: our design
matrix is assumed to be of reduced rank (\ie \eqnref{Xdeco} describes a reduced
rank matrix) and non-deterministic\footnote{Shao dismisses this complication by
stating his results will hold almost surely under certain conditions.}; our
affine model is built by PLS rather than OLS. However, absent any extant
results for the reduced rank formulation, I follow Shao's work, which you may
take with a grain of salt.

I will focus on two model comparison criteria suggested by Shao: delete-$d$ CV,
and \BIC.  Delete-$d$ CV is regular cross validation with $d$ objects in the
validation set.  The model which minimizes \RMSECV{} under the given grouping
is selected.  Because \nchoosek{\ro}{d} can be very large, only a number of
the possible CV groupings are used.  Shao's study suggests that Monte Carlo
selection of the CV groups can be effective with only \bigo{\ro} of the
possible groupings used.  Shao also proved that $d/\ro \to 1$ is a
prerequisite for asymptotic consistency.  In his simulation study, he used $d
\approx \ro - \ro^{3/4}$, and found that it outperformed delete-$1$ CV,
especially in those tests where selecting overly large models is possible.
\cite{Shao:1993:LMS}



Shao also examines a class of model selection criteria which contains the
General Information Criterion described by Rao and Wu, the minimization
of which, under certain assumptions, is equivalent to minimizing \BIC.
\cite{Shao:1997:ATLMS,RrWy1989}  For a subset of $\ms$ channels, the reduced
rank form of this criterion is
\[\BIC{} = \ln\MSEP + \frac{\log\ro}{\ro - \fc - 1}\ms,\]
where \MSEP is based on the given set of $\ms$ channels and $\fc$ factor PLS.
I use the denominator term $\ro-\fc-1$, rather than $\ro - \ms$ as
suggested by Shao for the OLS formulation, based on a simulation study.  This
allows meaningful comparison in situations where $\ms > \ro,$ although in this
case the expected value of \MSEP is penalized by a term quadratic in 
$\ms / \ro$.  \cite{NbCrr2005}
To continue the mongrelization of this criterion, I find it useful to replace
\MSEP by \MSECV for appropriately chosen CV groups:
\[\aBIC{} = \ln\MSECV + \frac{\log\ro}{\ro-\fc-1}\ms.\]
Minimization of this criterion favors parsimony more than minimization of
\RMSECV alone.  Until the asymptotic consistency of the reduced rank/PLS
model selection problem is addressed theoretically, I cannot recommend one of
these criteria over the other.

It is not obvious that the magnitudes of the elements of \vlam are sufficient
to establish an importance ordering on the channels.  For instance, it might be
appropriate to multiply the elements of \vlam by the corresponding element of
the regression vector $\regb$ chosen by WPLS on the entire data set, and use that
Kronecker product vector as the importance ordering.  It might be argued that
that product should further be multiplied by the sample standard deviation of
the channels.  As there seems to be no
general trend in comparing the two methods, I recommend implementing each of
these techniques and allowing the information criterion to select whichever model is
best, irrespective of which pragma produced the ordering.


\section{Crafting an Objective Function}\label{sec:objfun}

The ultimate goal is selection of a subset of the channels which minimizes 
delete-$d$ \RMSECV{} or one of the information criteria.  This should guide the
choice of the objective function which we numerically minimize in the continuous 
framework.  The obvious choice is to minimize \RMSECV{}, however the choice of
the CV groups can lead to an asymptotically inconsistent selection
procedure or long runtime.  
Moreover, the minimization 
of \RMSECV{} may also select a \vlam with a large number of nontrivial elements, which makes reversing the embedding difficult or noninformative.  

Thus one may choose to minimize an
objective function which approximates one of the information criteria,
balancing quality and parsimony, rather than minimizing
\RMSECV{}. 
Recall, for example, 
\(\aBIC{} = \ln\MSECV + \wrapparens{\ms\,{\log\ro}}/\wrapparens{\ro - \fc - 1}\).
The continuous embedding of the \MSECV term with respect to \vlam is understood.
To complete the embedding it only remains to estimate the subset size (number
of channels retained) of the model indicated by a continuous predictor
weighting \vlam.
%
%
%
%
My suggestion is to use a ratio of the $p$-norm to the $q$-norm:
\[\mrpq[p,q]{\vlam} =
\Parens{\frac{\norm[p]{\vlam}}{\norm[q]{\vlam}}}^{pq/(q-p)},\qquad\text{where }\,
\norm[p]{\vlam} = \Parens{\sum_j \abs{\lame{j}}^p}^{1/p},\,
\] 
for $0 < p < q < \infty.$ I claim \mrpq{\vlam} is an appropriate choice of 
model size estimate.  Note that \mrpq{} is scale invariant, that is,
\mrpq{\sclr \vlam} is constant for each nonzero choice of the scalar $\sclr.$
Also note that if $\vlam$ consists of $j$ ones and $n-j$ zeros, then
$\mrpq{\vlam} = j$.  See \figref{clover} for the behaviour of this function for
various values of $p, q.$  Using smaller values of $p$ creates a stronger
drive towards binary vectors by penalizing non binary vectors.


\begin{figure}[htb!]
\centering
	\psfrag{255}[][][0.85]{$\mrp[0.25,0.75]{}$}
	\psfrag{11}[lt][rt][0.85]{$\mrp[1,1.01]{}$}
	\psfrag{125}[][][0.85]{$\mrp[1,2.5]{}$}
	\psfrag{52}[][][0.85]{$\mrp[2,5]{}$}
	\psfrag{29}[rb][rb][0.85]{$\mrp[19,20]{}$}
	\includegraphics[angle=0,height=.62\columnwidth,width=.65\columnwidth]{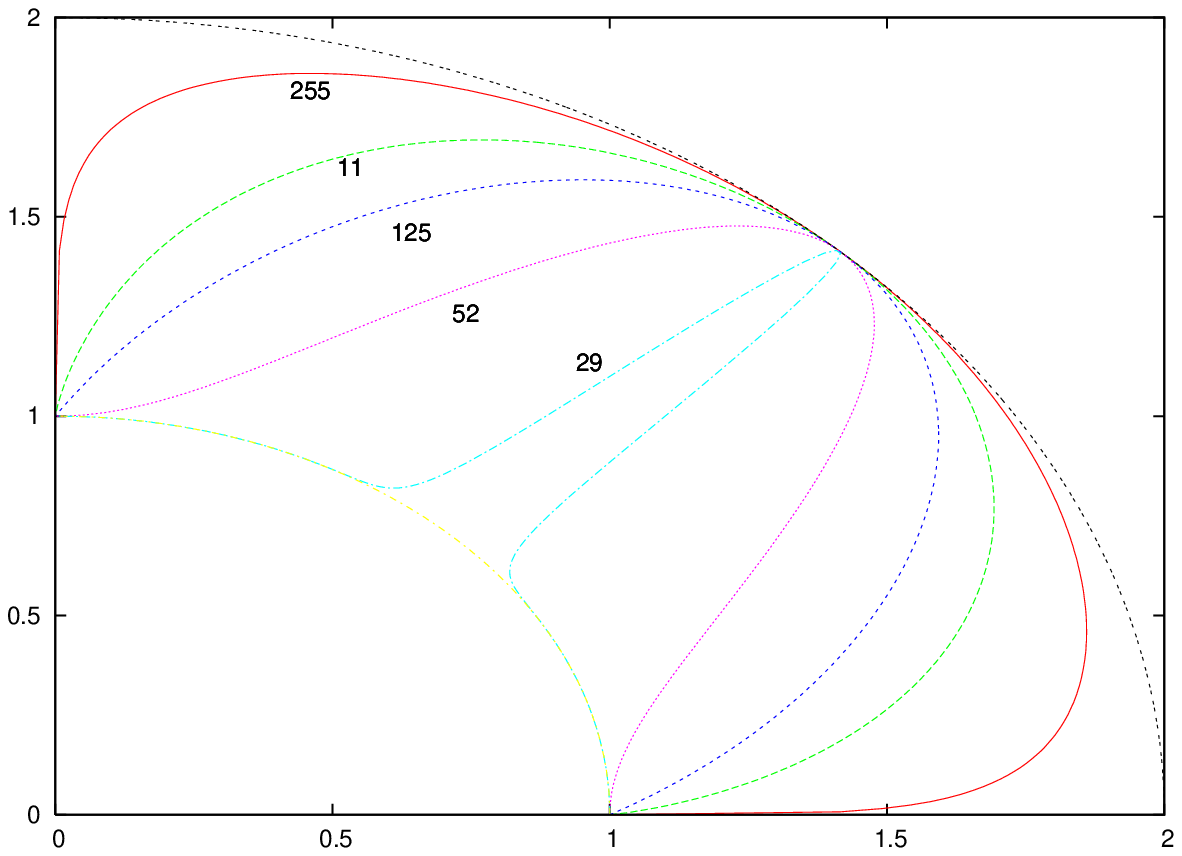}
\caption{The function \mrpq{\vlam} is plotted in polar coordinates for the two
dimensional vector $\vlam = \Angles{\cos{\theta},\sin{\theta}},$ and various
values of $p, q$ in the quadrant $0 \le \theta \le \pi/2$.  Note that each of 
the graphs passes through \tuple{0,1}, \tuple{\pi/4,2} and \tuple{\pi/2,1}, as
guaranteed by the fact that \mrpq{} measures the number of nonzero elements in
a scaled binary vector.  The circles of radius $1$ and $2$ are also plotted.}\label{fig:clover} 
\end{figure}

%

Thus to approximately minimize \BIC, one could minimize
\[\obf{\vlam} \defeq \ln\Parens{\MSECV\Parens{\vlam}} + (\ln\Parens{\ro}\,
\mrpq{\vlam})/(\ro-\fc-1),\]
the gradient of which is
\[
\glam{\obf{\vlam}} =
\frac{\glam{\MSECV\Parens{\vlam}}}{\MSECV\Parens{\vlam}} +
(\ln\Parens{\ro} \glam{\mrpq{\vlam}}) / (\ro-\fc-1).
\]


%
%

\section{Implementation Notes}\label{sec:impnotes}

The method was implemented in the Matlab\textsuperscript{\textsc{tm}}{}
language.  All results in this paper were performed in the 
GPL Matlab clone, GNU Octave
(version 2.1.69) \cite{eaton2001octave}, compiled with BLAS \cite{Dj2002}, on an AMD 
Athlon 64 4000+ (2.4 G\Hz) running Gentoo Linux, 2.6.15 kernel.

The BFGS and backtracking line search algorithms were implemented as outlined
by Nocedal and Wright.  \cite{NjWsj1999}  Sample code for the PLS
and Jacobian computation are given in \secref{code}.  The objective function
was supplemented by an optional term to simulate \BIC{}, with the $p$ and $q$
terms of \mrpq{} tunable parameters.  
The inverse of the sample standard deviation of each channel is generally used
as the starting iterate, \vlamk[0].  The initial approximation to the Hessian
is taken as some constant times the identity matrix.  Termination of BFGS was
triggered by the computation of a gradient smaller in norm than a lower bound,
by an upper limit on the number of major iterates, function evaluations, or
achievement of a lower bound on the change of the objective function.
Response weighting (\ie WPLS) has not yet been implemented.

Selection of optimal wavelengths was performed by minimization of a delete-$d$
\RMSECV{} or \aBIC{}, with the different models determined by ordering of \vlam
or by $\diag{\regb}\vlam$, whichever is chosen by the information criterion. 
The trivial model (responses are normal with approximated means and variances
and predictors are ignored) is also compared. 
An optional post-selection minimization is allowed on the selected channels.
The final results consist of a subset of the available channels and predictor
weightings for those channels.   This bit of cheating allows the method to keep
the advantages of properly weighted predictors.  Note that the weighting is
irrelevant and ignored in the case where the number of selected channels is
equal to the number of latent factors.

%
Unless specified by the user, the system must choose the cross validation
groups.  By default, this is done using the Monte Carlo framework suggested by Shao:
$2\ro$ different partitions of $\ro^{3/4}$ calibration
objects and $\ro - \ro^{3/4}$ test objects are randomly selected.

\section{Experiments and Results}

\SRCEK{} was tested on the three data sets used to compare wavelength
selection procedures by Forina \etal  \cite{Forinaetal2004,ForinaPC2006}
As in the original publication, these are referred to as 
\dsm, \dsk, and \dsa.  The data set \dsm{}
consists of moisture responses of 60 samples of soy flour, with predictors
measured with a filter instrument and originally appeared in a paper by Forina
\etal \cite{Forinaetal1995}.  The data set \dsk{}, originally from a paper by
Kalivas \cite{Kjh1997}, consists of moisture responses of 100 samples of wheat
flour, with 701 responses measured by an NIR spectrometer.  The data set \dsa{}
consists of 400 randomly generated objects with 300 channels.  The channels
are grouped into six classes, with a high degree of correlation between
elements of the first five classes, each consisting of 10 channels.  The response was 
generated by a linear combination of five of the responses (the first response
in each of the first five classes), plus some noise; the 250 channels of
the sixth class are entirely irrelevant to the responses.  However, the level
of noise in the response is large enough to mask the effects of the fifth
relevant channel.  The objects are divided into a training set of 100
objects, and an external set with the remainder.

In order to allow meaningful comparison between the results found here and in
the study of Forina \etal, I report \RMSECV values using the same CV
groupings of that study.  These were generated by dividing the objects into
groups in their given order.  Thus \eg the first group consists of the
\skth{1}, \skth{6}, \skth{11} objects and so on, the second group is the
\skth{2}, \skth{7}, \skth{12} objects and so on, \etc  Five groups are used.
\cite{Forinaetal2004,ForinaPC2006}  However, the objective function was computed
based on other groupings, as described below.

Note that, in light of Shao's studies, the CV groupings used by Forina 
\etal seem sparse both in number and in the number deleted ($\ro/5$). 
For this reason, it may be meaningless to compare the different subset selection 
techniques based on the \RMSECV{} for this grouping.  However, since
the channels retained by the different methods are not reported for the data
sets \dsk{} and \dsa, I can only compare the results of \SRCEK to
those of the other methods by the \RMSECV{} of this grouping or by the \aBIC
based on that \RMSECV{}.  For \dsm,
Forina \etal report the selected channels, making comparison based on
Monte Carlo CV groupings possible.  These are denoted by \RMSEMCCV, and based on
120 delete-$38$ groupings.

\SRCEK{} was applied to \dsm{} with 120 delete-$38$ Monte Carlo CV groups,
using \RMSEMCCV as
the objective function for 2 and 3 factor PLS.  Termination was triggered by
small relative change in the objective (relative tolerance in objective of
\tenex{}{-5}), which was achieved in both cases in at most 25 major iterates.  
Model selection is performed by minimization of \aBIC.
Results are summarized in \tabref{dsmtable},
and compared to the results found by Forina \etal  For 2 factor PLS, \SRCEK{}
selects 2 channels, L14:2100 and L16:1940, the same choice made by SOLS and
GA-OLS.  For 3 factor PLS, \SRCEK{} selects 3 channels, adding L20:1680 to
the previous two channels.  The 2 factor choice is preferred by both the CV
error and \aBIC.
In this case, \aBIC matches my intuitive ordering of these results.

\begin{table}[htb!]
\begin{center}
\footnotesize

\begin{tabular}{rrrll|ll}
			method & \fc & k & \tiny{\RMSECV} & \aBIC & \tiny{\RMSEMCCV} & \aBIC\\
\hline
             PLS & 2 & 19 & 1.369 & 1.99 & 1.438 & 2.09\\
             MUT & 2 & 8 & 1.354 & 1.18 & 1.407 & 1.26\\
       UVE norml & 2 & 7 & 1.350 & 1.10 & 1.403 & 1.18\\
          UVE 95 & 2 & 10 & 1.346 & 1.31 & 1.395 & 1.38\\
       GOLPE   I & 2 & 15 & 1.319 & 1.63 & 1.389 & 1.73\\
          UVE 90 & 2 & 12 & 1.338 & 1.44 & 1.387 & 1.52\\
             IPW & 3 & 3 & 1.308 & 0.756 & 1.361 & 0.836\\
       GOLPE III & 2 & 3 & 1.307 & 0.752 & 1.356 & 0.824\\
          MAXCOR & 2 & 10 & 1.268 & 1.19 & 1.318 & 1.27\\
             ISE & 2 & 2 & 1.264 & 0.613 & 1.311 & 0.686\\
       GOLPE  II & 2 & 6 & 1.256 & 0.887 & 1.298 & 0.953\\
        SOLS (5) & 2 & 2 & 1.203\footnotemark & 0.514 & 1.240 & 0.574\\
\hline
      \SRCEK & 2 & 2 & 1.203 & 0.514 & 1.240 & 0.574\\
      \SRCEK & 3 & 3 & 1.259 & 0.680 & 1.285 & 0.720\\
\end{tabular}
\end{center}
\caption{Results from selection methods applied to data set \dsm{} are shown ordered
by decreasing \RMSEMCCV (120 delete-$38$ MC groupings), with results from \SRCEK{}.  Adapted from the study of Forina \etal \cite{Forinaetal2004} The number of retained channels
is indicated by $k$, the number of latent factors is \fc.  The \RMSECV based on
the groupings of Forina \etal are also given. Two \aBIC{} values are reported,
based on the Forina and Monte Carlo CV groupings. 
\footnotesize{\thefootnote{}. Last digit apparently misreported 
by Forina \etal}}\label{tab:dsmtable}
\end{table}

The results of \SRCEK{} applied to \dsk{} are summarized in \tabref{dsktable},
and compared to the results from the previous study.  Several experiments were
attempted, each using 3--5 factors.  The first experiment, (a), uses the same
CV groupings as Forina \etal, and minimizes and selects based on \RMSECV for
this grouping.  In experiments (b) and (c), 240 delete-$68$ MC CV groups are
used, \RMSECV is minimized, and channel selection is based on, respectively, 
\RMSECV and \aBIC.  In experiments (d) and (e), 120 delete-$68$ MC CV groups
are used, the embedded \aBIC (using \mrpq[1,2]{}) is minimized, and selection
is based on, respectively, \aBIC and \RMSECV.  The final models of all
experiments used to compute \RMSECV for the same 200 delete-$68$ MC CV
grouping, to facilitate comparison.  The maximum acceptable number of
channels was taken to be $50$.

\begin{table}[htb!]
\begin{center}
\footnotesize
\begin{tabular}{rrrlr|l}
method&\fc&k&\tiny{\RMSECV}&\aBIC&\tiny{\RMSEMCCV}\\
\hline
%
PLS&5&701&0.2218&31.3&n/a\\   
MAXCOR&5&684&0.2217&30.5&n/a\\   
UVE 95\%&5&657&0.2227&29.2&n/a\\   
GOLPE I&6&648&0.2216&29.1&n/a\\   
MUT&6&575&0.2227&25.5&n/a\\   
GOLPE II&6&352&0.2167&14.4&n/a\\   
GOLPE III&6&32&0.2231&-1.42&n/a\\   
LASSO&14&14&0.2153&-2.31&n/a\\   
VS&6&14&0.2111&-2.42&n/a\\   
IPW 3&3&11&0.2174&-2.52&n/a\\   
IPW 2&2&11&0.2155&-2.55&n/a\\   
GAPLS&6&11&0.2078&-2.60&n/a\\   
ISE&2&7&0.2151&-2.74&n/a\\   
SOLS(2)&2&2&0.2408&-2.75&n/a\\   
SOLS(4)&4&4&0.2207&-2.83&n/a\\   
GAOLS a&4&4&0.2154&-2.88&n/a\\   
GAOLS b&4&4&0.2090&-2.94&n/a\\   
\hline 					                         
\SRCEK (a) &3&49&0.2007&-0.861&0.2166\\ 
\SRCEK (a) &4&50&0.1869&-0.931&0.3444\\ 
\SRCEK (a) &5&40&0.1843& -1.42&0.2173\\ 
\SRCEK (b) &3&50&0.2013&-0.807&0.2153\\ 
\SRCEK (b) &4&49&0.1914&-0.931&0.2090\\ 
\SRCEK (b) &5&49&0.1848&-0.976&0.2074\\ 
\SRCEK (c)  &3&9&0.2118& -2.67&0.2254\\ 
\SRCEK (c)  &4&8&0.1992& -2.84&0.2218\\ 
\SRCEK (c)  &5&5&0.2093& -2.88&0.2350\\ 
\SRCEK (d)  &3&3&0.2479& -2.65&0.2643\\ 
\SRCEK (d)  &4&4&0.2318& -2.73&0.2500\\ 
\SRCEK (d) &5&31&0.2626& -1.16&0.2936\\ 
\SRCEK (e) &3&10&0.2187& -2.56&0.2344\\ 
\SRCEK (e) &4&18&0.2079& -2.27&0.2301\\ 
\SRCEK (e) &5&31&0.2626& -1.16&0.2936\\ 
\end{tabular}
\end{center}
\caption{Results from selection methods applied to data set \dsk{} are shown ordered
by decreasing \aBIC, with results from \SRCEK{}.  
The results from \SRCEK{} are also tested against a MC CV grouping
consisting of 200 delete-$68$ partitions, yielding the \RMSEMCCV{} shown.}\label{tab:dsktable}
\end{table}

As expected, when trained on the CV groups of Forina \etal, \SRCEK is able to
produce small errors for that CV grouping, beating all the methods studied by
Forina \etal.  A number of caveats are necessary:  the \RMSECV{} values
reported use predictor weighting to build and test the models.  When the
weights are not used, the \RMSECV{} values are not as impressive.  For example,
for experiment (a), $4$ factors, the reported $0.1869$ becomes $0.2171$ when
the predictor weighting is not used.   I think the objection here should not be 
that \SRCEK uses predictor weighting, but that the methods studied previously
did not, which puts them at a disadvantage when compared to \SRCEK.
A more serious objection is that a small
\RMSECV{} for the CV groupings of Forina \etal does not appear to imply a small
\RMSECV{} for the MC CV groupings, although the inverse implication does seem
to hold.  This gives confidence in the results of \eg experiment (b)-$5$, which
gives small \RMSECV{} for both CV groupings.  


The effect of the objective function on the algorithm outcome for this
data set is shown in \figvref{blams}.  This graph shows the effective regression
vector $\LAM\regb\Parens{\vlam},$ for the \vlam found by BFGS minimization, for
experiments (b) and (d), using $4$ latent factors.  
When \RMSECV alone is minimized, the regression vector has no clear
structure.  However, when \aBIC is minimized, the regression vector divides the
channels into two groups, those with `low' relevance, and those with `high'
relevance.  As expected from spectroscopic data, the relevance of relevant
channels is more or less continuous.  Note, however, that minimizing on the
information criterion selects some of the same channels as minimizing on
\RMSECV{}, but this relationship does not strictly hold.  For example, some
channels in the range 1-10 appear to be given high relevance by \aBIC but not
by \RMSECV{}.  I suspect that there is some dependence on the initial vector
\vlam, and the CV groups used.

\begin{figure}[htb!]
\centering
		\psfrag{blam1}[rb][rb][0.5]{$\LAM\regb\Parens{\vlam}$}
		\psfrag{blam2}[rb][rb][0.5]{$\LAM\regb\Parens{\vlam}$}
		\includegraphics[width=.45\columnwidth,angle=270]{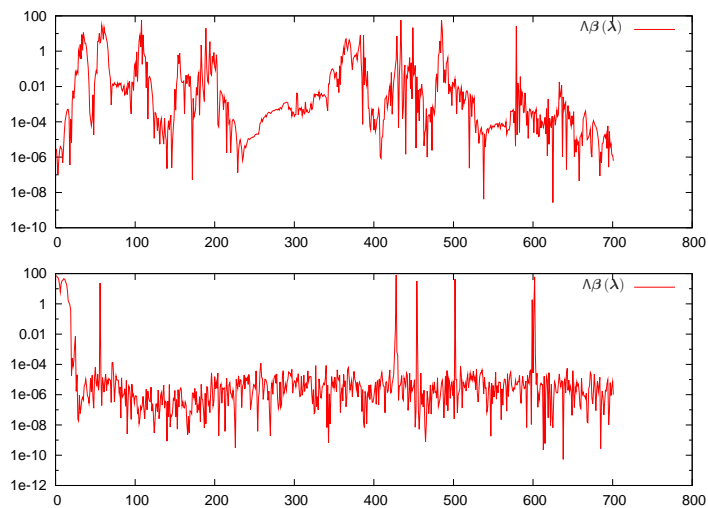}
\caption{The absolute value of the elements of vector $\LAM\regb\Parens{\vlam}$
are plotted for post-optimization \vlam on data set \dsk.  At top, \RMSECV for
MC CV groups was used as the objective function (experiment (b), $4$ latent
factors).  At bottom, the embedded \aBIC using the same CV groups for the
\RMSECV part and \mrpq[1,2]{}, was minimized (experiment (d), $4$ latent
factors).}\label{fig:blams}
\end{figure}



The results of \SRCEK{}, and other methods, applied to \dsa{} are summarized in 
\tabref{dsatable}. Two experiments were
attempted, each using 3--5 factors.  Each use autoscaling to generate the 
initial \vlam, and 120 delete-$75$ MC CV groups for computing \RMSECV.  
In both experiments, 
the embedded \aBIC (using \mrpq[0.8,2.4]{}) is minimized, and selection is based
on \aBIC and \RMSECV, respectively, in experiments (f) and (g).
The models built by these experiments are tested on a Monte Carlo CV grouping 
(200 delete-$68$ groups consisting of the 100 test data, not the external set),
and the computed \RMSEMCCV is shown in the table.   Note that there is some
mismatch between the \RMSECV for the groups used by Forina \etal and the
\RMSEMCCV for these groupings.  These measures produce different orderings of
the models, casting some suspicion on the sparse CV groupings used by the
previous study.

The post-optimized effective regression vector, $\LAM\regb\Parens{\vlam},$ is plotted in
\figvref{goodart} for this experiment (g) with $3$ factors, showing the discovered 
relevance of the channels.  All but one of the 250 irrelevant channels for this 
data set are found to have a low weight.  Selection by \RMSECV{} picks $6$
channels, numbers $4, 5, 10, 20, 24, 40$, which includes channels from four
of the five relevant correlated groups of channels.  The $5$ factor
experiment picks channels $1, 2, 3, 4, 5, 9, 10, 11, 13, 14, 15, 19, 20, 24,
40, 50$, which includes channels from each of the five correlated relevant
groups, and no irrelevant channels.  This may be attributable only to chance,
however, as the optimal effective regression vector $\LAM\regb\Parens{\vlam}$
after optimization in this experiment attributes high weights to a number of
irrelevant channels.

\begin{table}[htb!]
\begin{center}
\footnotesize
\begin{tabular}{rrrlc|rl|l}
method&\fc&k&\tiny{\RMSECV}&\aBIC&$\text{k}_{\text{bad}}$&ext
\tiny{\RMSEP}&\tiny{\RMSEMCCV}\\
\hline
GOLPE I &3&279&207.6&24.1&229&229.1&n/a\\   
GOLPE II&3&142&141.1&16.7&102&207.9&n/a\\   
ISE&3&109&122.2&14.8&63&206.7&n/a\\   
UVE&3&67&139.5&13.1&17&189.6&n/a\\   
MUT&3&59&147.8&12.8&9&184.5&n/a\\   
ISE&3&61&121.3&12.5&17&189.1&n/a\\   
MAXCOR&3&42&171.7&12.3&2&192.6&n/a\\   
GOLPE III&4&34&166.1&11.9&12&199.8&n/a\\   
SOLS(20)&20&20&106.1&10.5&13&240.5&n/a\\   
GAPLS&4&17&122.2&10.4&11&214.0&n/a\\   
SOLS(4)&4&4&166.9&10.4&0&195.3&n/a\\ 
IPW&4&4&163.3&10.4&0&192.6&n/a\\   
SOLS(5)&5&5&154.8&10.3&0&180.5&n/a\\   
GAOLS&10&10&126.0&10.2&3&218.3&n/a\\   
\hline
\SRCEK (f)&3&4&163.8&10.4&0&194.8&172.2\\ 
\SRCEK (f)&4&4&163.3&10.4&0&192.7&170.9\\
\SRCEK (f)&5&5&164.6&10.5&0&192.6&175.2\\
\SRCEK (g)&3&6&162.7&10.5&0&194.1&171.5\\ 
\SRCEK (g)&4&29&146.3&11.4&14&183.8&161.2\\
\SRCEK (g)&5&16&152.9&10.8&0&179.9&164.6\\
\end{tabular}
\end{center}
\caption{Results from selection methods applied to data set \dsa{} are shown ordered
by decreasing \aBIC, with results from \SRCEK{}.  The number of uninformative
channels selected is shown as well as the \RMSEP{} for the external set of 300
objects. The results from \SRCEK{} are also tested against a MC CV grouping
consisting of 200 delete-$68$ partitions, yielding the \RMSEMCCV{} shown.  The
objective function used was the embedded \aBIC.}\label{tab:dsatable}
\end{table}

\begin{figure}[htb!]
\centering
		\psfrag{blam}[rb][rb][0.5]{$\LAM\regb\Parens{\vlam}$}
		\includegraphics[width=.45\columnwidth,angle=270]{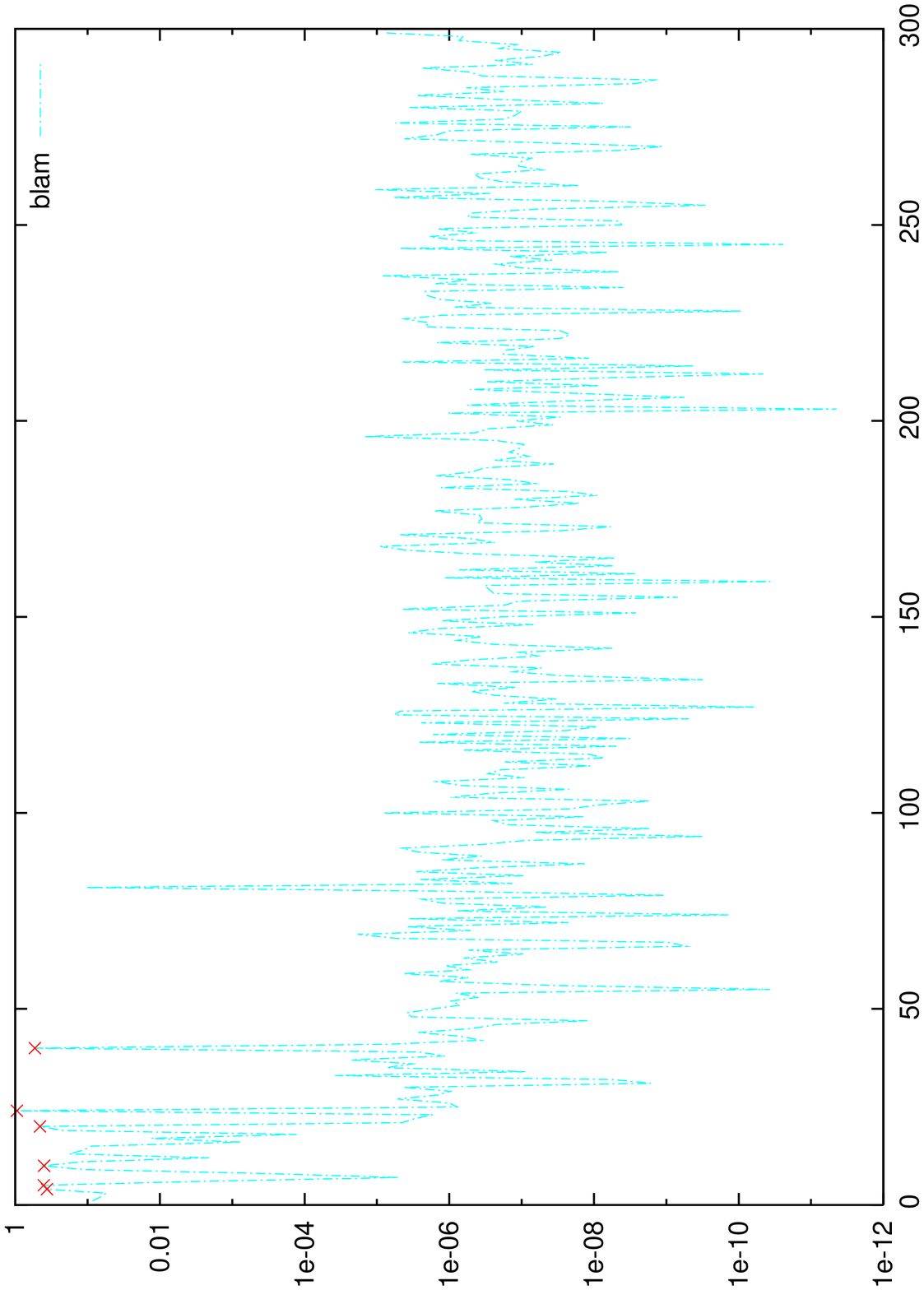}
\caption{The vector $\LAM\regb\Parens{\vlam}$ is plotted for post-optimization
\vlam on data set \dsa{} from \SRCEK{} (3), experiment (g), using embedded \aBIC
as objective function.  The 6 selected (by \RMSECV{}) channels are indicated with 
crosses.  Many of the 50 relevant channels are highly weighted, while most of
the 250 irrelevant channels have very low weights, with one notable
exception.  That the highly weighted irrelevant channel was not selected may
be attributable only to chance.}\label{fig:goodart}
\end{figure}



\section{Directions for Further Study}

Foremost, it seems one should be able to optimize \RMSECV{} with respect to
response weightings, \vgam, in addition to predictor weights \vlam.  
One can easily alter \algref{fullonpls} to also compute the gradients with
respect to \vgam.  The increased degrees of freedom increases the risk of
overfitting.   One should alter the embedded information criterion objective function
described in \secref{objfun} to balance this risk.
Since it is assumed the data are distributed as 
\(\ysym_j \sim \trans{\vect{\Xsym}_j}\regb + \rego + \mathcal{N}\Parens{0,\sigma^2/\game{j}},\)
we have added $\ro - 1$ new estimated parameters, \viz the separate variances
of each observation.  \cite{Kg1979,Ps1992} One strategy to embed the
information criterion, then, is to let
\(\obf{\vlam} = \ln\Parens{\MSECV\Parens{\vlam}} + (\mrpq{\vlam} -
\mrpq{\vgam}) \ln\Parens{\ro} / (\ro - \fc - 1).\)  The initial estimate 
should be that of 
homoscedasticity.  Comparison of models becomes tricky.  Work is underway on
this extension.

A theoretical study of the asymptotic consistency of different model selection
techniques for the case of reduced rank design matrix and PLS modeling would
provide \SRCEK{} with a more sound method for reversing the embedding, as well
as a better objective function.  As this problem seems intractable, a
simulation study might be appropriate.  If this theoretical study reveals that
one should minimize some combination of \RMSECV (for some tailored CV
groupings) with model size, I am confident that that measure can be
continuously embedded using the techniques of this note.


The method of ordering the successively larger models based on the optimal
\vlam, or on the Kronecker product of \vlam with the regression coefficient
seems rather \adhoc{}.  This step would also benefit from some theory, or could
perhaps be replaced by strategies
cribbed from other channel selection techniques (\eg IPW).
Conversely, some of these techniques may benefit from a preliminary predictor
weight optimization step via BFGS.

The \SRCEK{} method could also be extended to Kernel PLS regression, however I
suspect this would require the kernel to also compute derivatives, which
could be impractical.  \cite{HLetal2003}



I would be interested in an analysis of the structure of the
\RMSECV{} and embedded \aBIC objective functions.  For example, can either be shown 
to be convex (in \vlam) in general, or under assumptions on the data \Xk[]
and \yk[] which are justifiable in the chemometric context?  Moreover, can one
find sufficient sizes for the CV groupings to sufficiently reduce dependence of the
objective on the groupings? Will a sufficiently designed CV grouping make the
objective function convex or nearly so?

The choice of CV groupings affects both the ``quality'' of the RMSECV measure
(\ie how accurately it rates subsets of channels) and the runtime of \SRCEK{}.
Shao's Monte Carlo scheme, using $2\ro$ groupings of $\ro^p$ objects in the
calibration group, and the remainder in the test group, results in a total
runtime of \bigo{\ro\Parens{\ro^{2p}\cl(\fc-1) +
\ro^{p+1}\cl}} flops for the computation of the gradient of \RMSECV.  Thus I would
be interested to discover an acceptable lower bound on $p$ which gives
acceptable quality of \RMSECV.

The iterates of the BFGS algorithm for this objective function often display a
zigzagging behaviour towards the minimum.  Often this is the result of some
elements of \vlam ``overshooting'' zero.  It would be interesting to see if
this can be avoided by using other minimization techniques, for example the 
conjugate gradient method, or a proper constrained minimization implementation
of BFGS.  \cite{nocedal91theory}

Finally, the \SRCEK{} method as described in this paper has many tweakable
parameters: initial iterate \vlamk[0], initial approximation of the Hessian,
termination condition for BFGS, choice of objective function, $p$ and $q$ for
the continuous embedding of number of estimated parameters, \etc  While these
provide many possibilities to the researcher of the technique, they are an
obstacle for the end user.  Thus reasonable heuristics for setting these
parameters which work well in a wide range of settings would be welcome.
\bibliographystyle{plainnat}
\bibliography{wsel}
\pagebreak
\appendix

\section{A Brief Review of Vector Calculus}\label{sec:bguide}

The vector calculus appearing in this paper is not difficult, it is merely
involved.  To follow the derivations one need only have a good
understanding of basic linear algebra and a fondness for the product derivative rule.

First the gradient and Jacobian:  given a scalar function $f(\vlam)$, its
gradient is the vector $\glam[f]$ whose \kth{j} element is the partial derivative of 
$f$ with respect to \lame{j}, that is \prbypr{f}{\lame{j}}.
Given a vector valued function $\vect{f}(\vlam)$ which outputs an
$l$-dimensional vector when given the $m$-dimensional vector \vlam as input,
its Jacobian is an $l\cross m$ matrix whose \kth{(i,j)} element is the partial
derivative of the \kth{i} component of \vect{f} with respect to \lame{j}, that
is
\[\Parens{\dlam{\vect{f}}}_{i,j} = \prbypr{f_i}{\lame{j}}.\]

Remarkably we need nothing more exotic than this.  Some convenient rules to
fill our toolbox:
\begin{compactenum}
\item The gradient product rule: given functions $f$ and $g$ we have
\begin{equation}
\glam\Parens{fg} = f\glam{g} + g\glam{f}.
\label{eqn:gprodrule}
\end{equation}
This follows because the \kth{j} element of \glam[\Parens{fg}] is the partial of
$fg$ with respect to \lame{j}.  Using the scalar product rule we have
$\prbypr{fg}{\lame{j}} = f\prbypr{g}{\lame{j}} + g \prbypr{f}{\lame{j}},$
from which our rule follows.

The gradient quotient rule follows in a similar fashion: for given $f$ and $g$:
\begin{equation}
\glam\Parens{f/g} = \frac{g\glam{f} - f\glam{g}}{g^2}.
\label{eqn:gquotrule}
\end{equation}

The gradient dot-product rule is similar; given vectors \vect{v} and \vect{w}
we have
\begin{equation}
\glam\Parens{\ip{\vect{v}}{\vect{w}}} = \trans{\dlam{\vect{v}}}\vect{w} +
\trans{\dlam{\vect{w}}}\vect{v}.
\label{eqn:gdprodrule}
\end{equation}

\item The Jacobian scales: given a constant matrix \Mtx{M} and a vector-valued function
\vect{f} we have
\begin{equation}
\dlam{\Mtx{M}\vect{f}} = \Mtx{M}\dlam{\vect{f}}.
\label{eqn:jprodrule}
\end{equation}
This follows from the linearity of the matrix product and the derivative:
\[\Parens{\dlam{\Mtx{M}\vect{f}}}_{i,j} 
= \prbypr{\Parens{\Mtx{M}\vect{f}}_{i}}{\lame{j}}
= \prbypr{\sum_k M_{i,k} f_k}{\lame{j}}
= \sum_k M_{i,k} \prbypr{f_k}{\lame{j}}
= \sum_k M_{i,k} \Parens{\dlam{\vect{f}}}_{k,j}
= \Parens{M \dlam{\vect{f}}}_{i,j}
\]
\item The diagonal rule: given vector-valued function \vect{f}, and letting
$\LAMk{p} = \Parens{\diag{\vlam}}^p$ then
\begin{equation}
\dlam{\LAMk{p}\vect{f}} = p\LAMk{p-1}\diag{\vect{f}} + \LAMk{p} \dlam{\vect{f}}.
\label{eqn:diagrule}
\end{equation}
Again the proof is trivial:
\begin{align*}
\Parens{\dlam{\LAMk{p}\vect{f}}}_{i,j} 
&= \prbypr{\Parens{\LAMk{p}\vect{f}}_{i}}{\lame{j}}
= \prbypr{\sum_k \Parens{\LAMk{p}}_{i,k} f_k}{\lame{j}}
= \prbypr{\lamex{p}{i} f_i}{\lame{j}}\\
&= \lamex{p}{i} \prbypr{f_i}{\lame{j}} + f_i \prbypr{\lamex{p}{i}}{\lame{j}}
= \lamex{p}{i} \Parens{\dlam{\vect{f}}}_{i,j} + f_i p\lamex{p-1}{i}\kron{i,j}\\
&= \Parens{\LAMk{p} \dlam{\vect{f}}}_{i,j} + \Parens{p\LAMk{p-1}\diag{\vect{f}}}_{i,j}.
\end{align*}
We have used $\kron{i,j}$ to be the Kronecker delta, which is one if $i=j$ and
zero otherwise.

\item The product rule for Jacobians: given scalar function $f$ and vector
function \vect{g}, we have
\begin{equation}
\dlam{\Parens{f\vect{g}}} = f \dlam{\vect{g}} + \vect{g}\trans{\Parens{\glam[f]}}.
\end{equation}
The proof is similar to above.

\item A useful composite rule; given vectors \vect{v} and \vect{w} and matrix
\Mtx{M}, then
\begin{equation}
\dlam{\Parens{\trans{\vect{v}}\Mtx{M}\LAMk{p}\trans{\Mtx{M}}\vect{w}}} = 
\trans{\dlam{\vect{v}}}\Mtx{M}\LAMk{p}\trans{\Mtx{M}}\vect{w} + 
p\, \diag{\trans{\Mtx{M}}\vect{w}}\LAMk{p-1}\trans{\Mtx{M}}\vect{v} +
\trans{\dlam{\vect{w}}} \Mtx{M}\LAMk{p}\trans{\Mtx{M}}\vect{v}.
\end{equation}
This rule follows from application of the dot-product and diagonal rules given
above.

\end{compactenum}

\section{Code}\label{sec:code}
I present Matlab\textsuperscript{\textsc{tm}}{} compatible code for computing
the preimage of the PLS regression coefficient and its Jacobian on the
following page.  This code is merely a realization of \algref{fullonpls}.
The code was tested in GNU Octave, and used to produce some of the results depicted
in \figvref{comprtimes}.
\pagebreak

\begin{verbatim}
function [alpha,dalpha,b0,db0]	= fastpreplsandjacobian(X,y,lambda,gam,l)
% [alpha,dalpha,b0,db0]	= fastpreplsandjacobian(X,y,lambda,gam,l)
%
% code to compute the preimage of the l-factor PLS regression coefficient 
% to fit the model y \approx X diag(lambda) diag(lambda) X' alpha
% also computes the jacobian of the preimage alpha with respect to lambda.
% and the intercept and its gradient wrt lambda.
%
% input:
%  X        an m by n matrix.             y    an m by 1 vector.
%  lambda   n vec prdctr weights.       gam    an m vec objct weights.
%  l        the number of pls factors.
% output:
%  alpha    a m by 1 vector of the regression coefficients.
%  dalpha   a m by n matrix of the Jacobian of beta wrt lambda.
%  b0       the scalar intercept (centers data)
%  db0      the gradient of same.
% nb: assumes m << n
% Author: Steven Pav % Created: 2006.04.11 % Copyright 2006
% sanity checking:
[m,n] = size(X);lambda = lambda(:);gam = gam(:);y = y(:);
if ((rows(y) != m) || (length(lambda) != n) || (length(gam) !=m)) 
  error('size mismatch.'); end
%allocate storage
V  = zeros(m,l);dV = zeros(m,n,l-1);w = zeros(l,1);dw = zeros(l,n);
q  = zeros(l+1,1);dq = zeros(l+1,n);
%used so much we compute once and store.
Xlam = X * diags(lambda);XLXg = Xlam * Xlam' * diag(gam); %nm^2 hit.
T0 = Tk = ones(m,1);dTk = zeros(m,n);Vk = y;dVk = zeros(m,n);
t0 = sum(gam);
for k=0:l
  rk = y' * (gam .* Tk);  drk = dTk' * (gam .* y);
  tk = Tk' * (gam .* Tk); dtk = 2 * dTk' * (gam .* Tk);
  q(k+1) = rk/tk; dq(k+1,:) = (drk - q(k+1) * dtk)' / tk;
  if (k < l)
    Vk = Vk - q(k+1) * Tk;V(:,k+1) = Vk;
    dV(:,:,k+1) = (dVk = dVk - q(k+1)*dTk - Tk * dq(k+1,:));
    Qk = XLXg * Vk;dQk = XLXg * dVk + 2 * Xlam * diag(X' * (gam .* Vk));
    Uk = Qk .- (gam' * Qk / t0);dUk = dQk - T0 * (gam' * dQk / t0);
    w(k+1) = Tk' * (gam .* Uk) / tk;
    dw(k+1,:)  = (dTk' * (gam .* Uk) + dUk' * (gam .* Tk) - w(k+1) .* dtk)' / tk;
    dTk = dUk - w(k+1) * dTk - Tk * dw(k+1,:);
    Tk = Uk - w(k+1) * Tk;
  end
end

%now compute M\q and its jacobian
iMq = zeros(l,1);diMq = zeros(l,n);
iMq(l) = q(l+1);diMq(l,:) = dq(l+1,:);

for k=(l-1):-1:1
  iMq(k)    = q(k+1) - w(k+1) * iMq(k+1);
  diMq(k,:) = dq(k+1,:) - (w(k+1) .* diMq(k+1,:) + iMq(k+1) .* dw(k+1,:));
end

%now compute VM\q and its jacobian
alpha = V * iMq;dalpha = V * diMq;      %+ more stuff:
%ack! no tensor product in octave/Matlab :(
for k=2:l    dalpha  += iMq(k) .* dV(:,:,k); end %dV(:,:,1) is all zeros?

%now the intercept
b0  = gam' * (y - XLXg * alpha) ./ sum(gam);
db0 = - (XLXg * dalpha + 2 * Xlam * diag(X' * (gam .* alpha)))' * gam ./ sum(gam);

endfunction
\end{verbatim}

\end{document}